\newtheorem{lemma}{Lemma}
\newtheorem{proposition}{Proposition}
\definecolor{gris0}{gray}{0.98}
\definecolor{gris1}{gray}{0.93}
\definecolor{gris2}{gray}{0.84}
\definecolor{gris3}{gray}{0.75}
\definecolor{gris4}{gray}{0.69}
\definecolor{gris5}{gray}{0.63}
\definecolor{gris6}{gray}{0.57}
\begin{document}

\begin{frontmatter}

\title{{\color{black} Maintenance cost assessment for heterogeneous multi-component systems \\ incorporating perfect inspections and waiting time to maintenance}}


\author{Lucía Bautista} 
\address{Department of Mathematics \\ University of Extremadura, Cáceres (Spain) \\ email: luciabb@unex.es}

\author{Inma T. Castro \corref{mycorrespondingauthor}}
\address{Department of Mathematics \\ University of Extremadura, Cáceres (Spain) \\ email: inmatorres@unex.es}

\author{Luis Landesa} 
\address{Department of Computers and Communications Technology \\ University of Extremadura, Cáceres (Spain) \\ email: llandesa@unex.es}




\cortext[mycorrespondingauthor]{Corresponding author}


\begin{abstract}

 
{\color{black} Most existing research about complex systems maintenance assumes they consist of the same type of components. However, systems can be assembled with heterogeneous components (for example degrading and non-degrading components) that require different maintenance actions. Since industrial systems become more and more complex, more research about the maintenance of systems with heterogeneous components is needed. {\color{black} For this reason, in this paper, a system consisting of two groups of components: degrading and non-degrading components is analyzed. The main novelty of this paper is the evaluation of a maintenance policy at system-level coordinating condition-based maintenance for the degrading components, delay time to the maintenance and an inspection strategy for this heterogeneous system. } {\color{black} To that end, an analytic cost model is built using the semi-regenerative processes theory}. Furthermore, a safety constraint related to the reliability of the degrading components is imposed. To find the optimal maintenance strategy, meta-heuristic algorithms are used.}

\end{abstract}
\begin{keyword}
condition-based maintenance, opportunistic maintenance, gamma process, semi-regenerative process, lead time, safety constraint. 
\end{keyword}

\end{frontmatter}


\section{Introduction}
Maintenance of multi-component systems is a key challenge usually conditioned by dependencies between components. In the framework of multi-component systems, in this paper, we focus on three important aspects of these systems that affect to their maintenance and increase the variability in maintenance modeling. These aspects are the heterogeneity of the components, the opportunistic maintenance and the lead time to start a maintenance action. \\

{\bf 1. Heterogeneous components}. Multi-components systems can be assembled 
of heterogeneous components that require different maintenance strategies \cite{Minou}. It is the case when a system is composed of mechanical and electronic components. Electronic component failures occur without warning. It contrasts with perceptible deterioration that are frequent precursors of failure in mechanical components \cite{Buchacker}. Examples of these multi-component systems with electronic and mechanical parts can be found in a cooling system and in a lithography system. 
\begin{itemize}
\item Cooling systems play a key role in many industrial applications. Because of their electromechanical structure, components of these cooling systems are split into degrading and non-degrading parts. Mechanical parts are related to bearings, rotor or blades. The fan's electronic parts are considered as non-degrading \cite{Heier2} \cite{Buchacker} and these components are grouped in the so-called electronic control unit (ECU). 
\item A lithography system consists of components subject to a continuous deterioration (certain belts that transport paper inside the printer or filters that 
slowly get clogged). Furthermore, lithography systems also consist of electronic components that do not give any warning before they fail \cite{Zhu}.        
\end{itemize}
Maintenance policies with heterogeneous components are validated and applied at an OEM (Original Equipment Manufacturer) in the compressed air, generator and pump industry \cite{Poppe}. \\

{\bf 2.  Opportunistic maintenance}. In the maintenance of multi-component systems, opportunistic maintenance plays an important role in reducing the maintenance cost and avoiding unnecessary shut-downs. 
This maintenance stems from the fact that the cost of simultaneous
maintenance actions on various components could
be less than the total cost of individual maintenance
actions. Early works on opportunistic maintenance applied this maintenance technique to different systems such as rocket engines, manned air-crafts and ballistic missile systems \cite{Samat}. {\color{black} Interactions between opportunistic maintenance and condition-based maintenance have been studied by different authors (\cite{Koochaki})}. \\

{\bf 3. Lead time}. In many practical situations, there is a lead time between the failure of a component and the time of its maintenance. This lead time is due to different factors. For example, repairmen are not continuously available or spare parts may not be on stock and have to be ordered \cite{Jonge2}. Literature that deals with delayed time (or lead time) in maintenance operations for degrading systems can be found in \cite{MeierHirmer} and \cite{Zhao} among others. \\

In this paper, a multi-component system  with $m$ degrading components and $n$ non-degrading components is {\color{black} studied} ($m, n >0$). The degrading components are subject to a continuous deterioration modeled {\color{black} as a gamma process} \cite{CastroPr}. Non-degrading component failures occur without evidence of deterioration. {\color{black} We assume in this paper that } all non-degrading components are treated as one part (the so-called single electronic control unit (ECU) in some systems\cite{Heier2}). {\color{black} For the sake of simplicity, we call this part the ``non-degrading part''}. An exponential distribution models the time between failures of the non-degrading {\color{black} part} (\cite{Zhu} and \cite{Poppe}). The use of the exponential distribution as time between failures of the non-degrading {\color{black} part} is justified by the Palm-Khintchine theorem. This theorem states that {\color{black} the superposition of a large number of independent equilibrium renewal processes, each with small intensity, is asymptotically a Poisson process (\cite{Heyman}, pages 156-157). This theorem provides us a theoretical justification for using the exponential distribution for the failure times of the {\color{black} non-degrading part of the system} assuming a sufficient number of independent non-degrading components $n$ with intensities fulfilling the assumptions of the Palm-Khintchine theorem. These assumptions are that, as $n$ increases, the asymptotic rate of the combined process is constant and the combined process has renewals very infrequently. To obtain a theoretical development of this theorem, the reader is referred to \cite{Heyman}}  
{Poppe et. al \color{black} \cite{Poppe} validated how many non-degrading components are required to assume that the time between failures of the {\color{black} non-degrading part} is exponentially distributed.}

When a {\color{black} degrading component (or the non-degrading part) fails}, a signal is sent to the maintenance team. The maintenance team takes $\tau$ units of time to arrive on site and next perform a corrective maintenance. 
This lead time can be justified as the time to arrive on site and/or the time to deliver the failed components to its supplier. {After \color{black} a corrective maintenance, the degrading component (or the non-degrading part) can consider as a new one.    }

In an inspection time, there is not lead time to start the maintenance. The justification is that, since the inspection is scheduled beforehand, maintenance team has sufficient time to overcome the need of new components or the time to arrive on situ. These maintenance times give the maintenance team an opportunity to check the state of the rest of the components. The degradation processes of the degrading components are monitored and this information is used to decide if the component should be correctively or preventively replaced by introducing thresholds on the degradation level. In the case of non-degrading {\color{black} part}, {\color{black} while another component is being maintained}, maintenance team checks {\color{black} its state}. If {\color{black} it is failed, a maintenance corrective is performed}.

To reduce the downtime of the degrading components, inspections are performed. These inspection times are also opportunities of maintenance and some degrading components can be preventively replaced if they are too degraded or correctively replaced if they are failed or left as they are.

Each maintenance action implies a cost and a reward function is also included in this model. 
The objective is to analyze the time between inspections and the preventive thresholds of the degrading components that minimize a given objective cost function. The complete replacement of the system (renewals) can be seen as a regenerative process and renewal techniques can be applied to analyze the optimal maintenance strategy. But, describing the functioning of the system using renewal theory can be tedious. {\color{black} To deal with it, the asymptotic behaviour of the process that models the maintained system is reduced using semi-regenerative techniques}. The use of these techniques simplifies the analysis and reduces the computation time \cite{Castro}. Conditions are given in this paper to guarantee the application of the semi-regeneration theory. {\color{black} Our paper is not the first to introduce the use of semi-regenerative techniques in the {\color{black}multi-component system} analysis: two-unit series system \cite{Castanier} and multi-unit systems with identical components \cite{Zhang} are examples of the use of these techniques in maintenance literature.  }

The complexity of optimizing the maintenance strategy increases when the number of components increases. Meta-heuristic algorithms, such as genetic algorithms (GA) \cite{Compare}, colony algorithms \cite{Samrout} and simulated annealing algorithms \cite{Rasmekomen}, are widely employed to optimize maintenance models. Furthermore, to control the risk of a failure of all the degrading components, a requirement is specified. This requirement imposes that the probability that all the degrading components fail between two successive maintenance actions should not exceed a fixed probability limit. The optimization of this maintenance strategy is performed under this requirement.

The model that we present in this article is inspired by \cite{Castro}. However, there are three important differences. 

\begin{itemize}
\item Whereas \cite{Castro} considered a continuous monitoring of the system, in this paper an inspection policy is scheduled. 
\item In \cite{Castro}, when a component fails it is immediately replaced. In this paper, when {\color{black} a failure happens between inspections}, a waiting time to the replacement is imposed. Model shown in \cite{Castro} is a particular case of the model shown in this paper when the time between inspections tends to infinity and delay time is equal to zero. 
\item In the search of the optimal maintenance strategy, meta-heuristic algorithms are used in this paper whereas in \cite{Castro} the optimal maintenance policy is obtained by visualization of the objective cost function. 
\end{itemize}

The paper is organized as follows. In Section \ref{description} the behavior of the multi-component  system is modeled, describing its different components. The theoretical probabilistic background is developed in detail in Sections \ref{theory} and \ref{probability}. Numerical examples are given in Section \ref{numerical}. Section \ref{conclusions} concludes.

\section{System description} \label{description}
This section describes the functioning of the system and the main assumptions that the model fulfills. 

\subsection{General assumptions}
A multi-component system consisting of $m$ degrading components and $n$ non-degrading components is analyzed. {\color{black} We assume in this paper that the $n$ non-degrading components are treated as one part, the so-called non-degrading part}. 
\begin{enumerate}
    \item The $m$ degrading components are subject to a continuous deterioration following a gamma process.  Let $X_i(t)$ be the deterioration of component $i$ at time $t$ for $i=1, 2, \ldots, m$. For $s<t$, the density function of $X_i(t)-X_i(s)$ is given by:
    \begin{equation}\label{fgamma}
    f_{\alpha_i (t-s),\beta_i}(x)= \frac{{\beta_i}^{\alpha_i (t-s)}}{\Gamma({\alpha_i (t-s)})} x^{\alpha_i (t-s)-1} e^{-\beta_ix}, \hspace{1cm} x \geq 0,
    \end{equation}
     for $i \in I^m$, where $I^{m}=\left\{1, 2, \ldots, m\right\}$ and where $\Gamma(\cdot)$ is the well-known gamma function. {\color{black} These degradation processes grow independently each other.  }
\item When the deterioration level of a degrading component exceeds a failure threshold, this component fails and a signal is sent to the maintenance team. Maintenance team takes $\tau$ time units to start the corrective replacement. {\color{black} The system continues working during the delay time $\tau$ until the maintenance team performs the corresponding replacement}. Component breakdowns can occur between the signal time and the onset of the maintenance action. The corrective maintenance time is used as an opportunity to check the {\color{black} state of the rest of components}. At the corrective maintenance time, if the deterioration level of the degrading component $i$, with $i \in I^m$, exceeds the preventive threshold $M_i$ but it is less than the failure threshold $L_i$, a preventive replacement of this component is performed. At the corrective maintenance time, if the deterioration level of the degrading component $i$ exceeds the failure threshold $L_i$, a corrective replacement of this component is performed. {\color{black} If a failure of the non-degrading {\color{black} part} occurs in the waiting time to maintenance, a corrective maintenance of this component is performed}.
    \item Let $Y$ be the time between failures of the non-degrading {\color{black} part}. It follows an exponential distribution with parameter $\lambda$ and survival function
    $\bar{F}_Y(t)= \exp(-\lambda t)$. 
    \item When the non-degrading {\color{black} part fails}, a signal is sent to the maintenance team and it takes $\tau$ time units to start the corrective maintenance. {\color{black} As before, the system continues working during the delay time $\tau$ until the maintenance team performs the corresponding maintenance tasks}. At the time of the maintenance of the non-degrading {\color{black} part}, if the deterioration level of the degrading component $i$, $i \in I^m$, exceeds the preventive threshold $M_i$ but it is less than $L_i$, a preventive maintenance is performed. If the deterioration level of the degrading component $i$ exceeds the failure threshold $L_i$, a corrective maintenance is performed. Corrective and preventive maintenance imply the replacement of the component by a new one.

    \item Inspections are performed to check the state of the system. In these inspection times, if the deterioration level of the degrading component $i$, $i \in I^m$, exceeds the preventive threshold $M_i$ but it is less than $L_i$, a preventive maintenance on the component $i$ is produced. If {\color{black}a degrading component or the non-degrading part} is failed in an inspection time, a corrective maintenance is performed.   
    
\item A sequence of costs and rewards is imposed in this model. When degrading components are working, they provide a reward that decreases with the deterioration level of the component. A classical exponential function given by \cite{Niese} is used in this paper. Let $g_i$ be the reward function for the degrading component $i$. Given the deterioration level $x$ of the degrading component $i$, the reward function $g_i$ is given by
\begin{equation*}
g_i(x)=\theta_0+g \exp{(-\gamma_i x)}, \quad 0 \leq x \leq L_i, \quad \gamma_i >0, \quad i \in I^m, 
\end{equation*}
where $\theta_0, g \geq 0$. If $\gamma_i=0$, the reward is constant.

A corrective (preventive) replacement of the component $i$ implies a cost of $C_i^c$ ($C_i^p$) monetary units (m.u.). The corrective maintenance of the non-degrading {\color{black} part} implies a fixed cost of $C^f$ m.u. When the degrading component $i$ is down, $i \in I^m$, a cost of $c_i$ monetary units (m.u.) per {\color{black} time units} (t.u.) is implied. When the non-degrading {\color{black} part is} down, a cost of $c^{nm}$ m.u. per t.u. is produced. Each inspection costs $C^I$ m.u. 
\end{enumerate}
{\color{black} By maintenance time we mean the time in which maintenance actions are performed}. The different maintenance actions in a maintenance time are the following. 
    \begin{itemize}
        \item  \textbf{Degrading components}
           \begin{enumerate}
            \item If the deterioration level of a degrading component exceeds the failure threshold, this component is correctively replaced. 
            \item If the deterioration level of a degrading component exceeds its preventive threshold but it is less than the failure threshold, a preventive replacement is performed. 
            \item The degrading component is left as it is if its deterioration level is lower than its preventive threshold.
         \end{enumerate}
        \item \textbf{Non-degrading part}
        \begin{enumerate}
            \item If the {\color{black} non-degrading part is failed, it is correctively maintained}.
        \end{enumerate}
    \end{itemize}

{\subsection{Inspection policy}}
{\color{black} An inspection policy is integrated in this model}. We assume that the inspections remain fixed over time and they are performed at periodic times $T, 2T, 3T, \ldots$ with time between inspections $T_{k-1}-T_k=T>0$.   

In absence of maintenance, a new system starts working at time $t=0$. Let $Z$ be the time to the first failure, if   
$$T_{k-1} < Z < Z+\tau<T_k,$$ 
then the maintenance action is performed at time $Z+\tau$. However, if {\color{black} the failure arrives at time $Z$} with 
$$T_{k-1} < Z <T_k <Z+\tau, $$
the maintenance action is performed at time $T_k$ since an inspection time was beforehand scheduled at time $T_k$. Hence, the maintenance action is performed at time $\min(Z+\tau,T_k)$.  

Figure \ref{tikz2} represents an scheme of the inspection policy. Inspection times are denoted by $T_1, T_2, \ldots$ and maintenance times by $O_1, O_2, \ldots$. Orange arrows represent time between maintenance actions. 


\begin{figure}[htbp]
\centering
\includegraphics[scale=0.85]{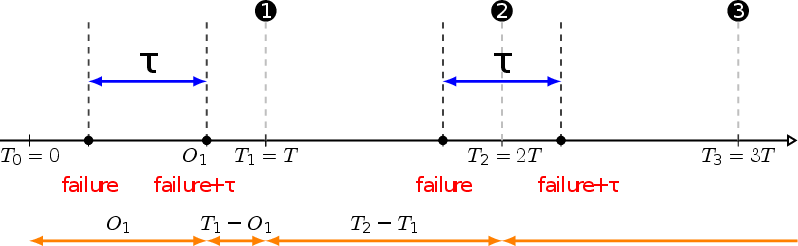}
\caption{Inspection policy.} \label{tikz2}
\end{figure}

{\color{black} Figure \ref{paths2} shows a realization of the evolution of the maintained system with two identical degrading components (preventive threshold $M$, failure threshold $L$). Let $T_1, T_2, \ldots $ be the inspection times and let $O_1, O_2, \ldots$ be the maintenance times. In the first inspection $T_1=T$, a corrective replacement of component 1 and a preventive replacement of component 2 are performed. In the second inspection time, both components are left as they are. A maintenance time occurs in $(T_2,T_3)$ due to the failure of the non-degrading {\color{black} part}. Finally, in $T_4$, the corrective replacements of the two degrading units are performed. }

 \begin{figure}[htbp]
    \includegraphics[scale=0.4]{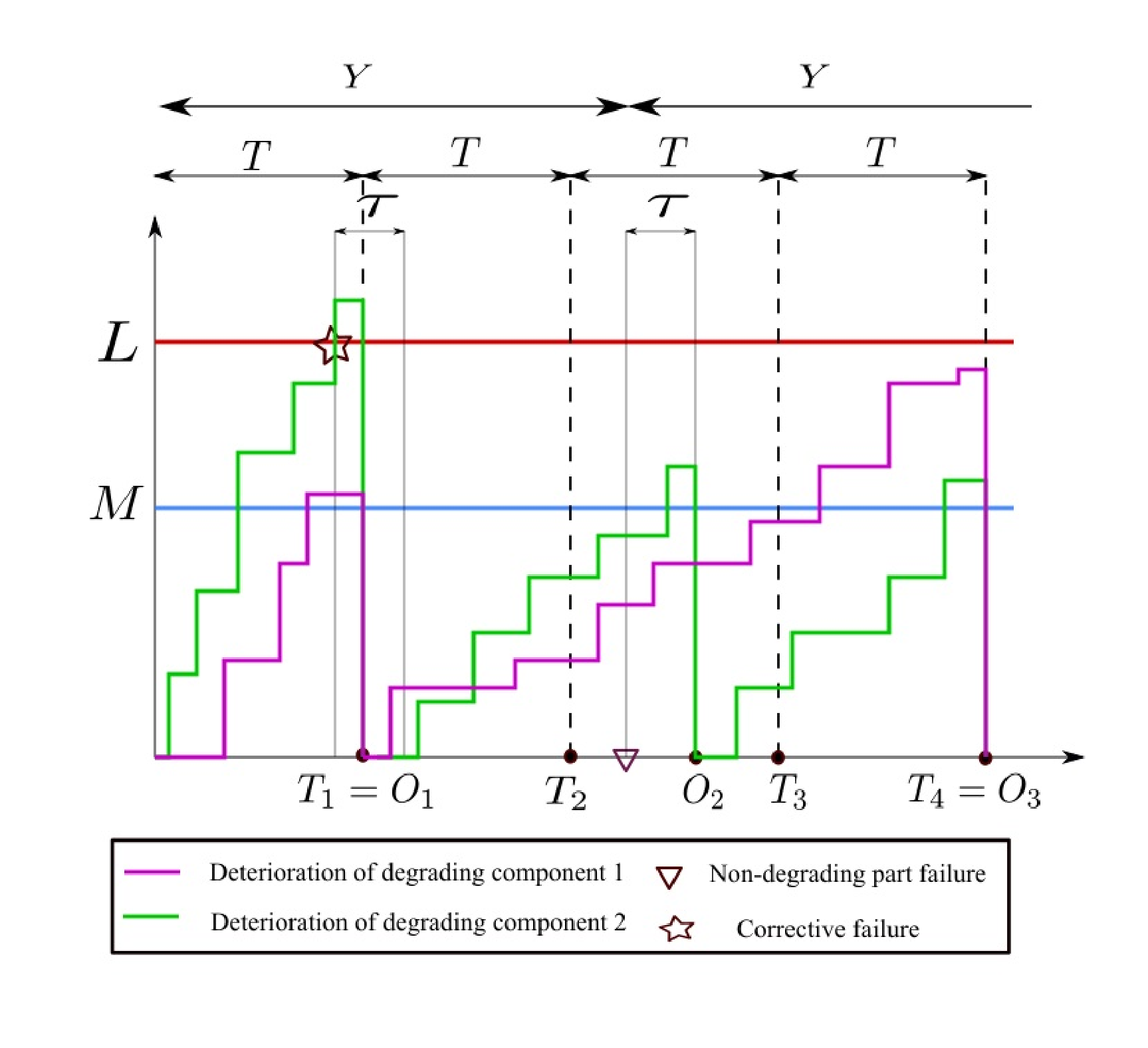}
   \caption{Paths of the maintained system.} \label{paths2}
   \end{figure}

\section{Theoretical development} \label{theory}

Let ${\bf W}(t)$ be the degradation of the degrading components at time $t$:
\begin{equation*} 
 {\bf W}(t)=\left(X_1(t), X_2(t), \ldots, X_m(t)\right), \quad t \geq 0, 
\end{equation*}
where $X_i(t)$ denotes the degradation of degrading component $i$ at time $t$ with $i \in I^m$.

\subsection{Time to a renewal} 
Notation $\sigma_{z_i}$ stands for the 
first time at which the gamma process $\left\{X_i(t), t \geq 0\right\}$ exceeds a degradation level $z_i$. We recall that, for a fixed $z_i$ and for a gamma process with parameters $\alpha_i$ and $\beta_i$, the distribution of $\sigma_{z_{i}}$ is given by
\begin{align*}
 F_{\sigma_{z_i}}(t) = \int_{z_i}^{\infty} f_{\alpha_i t, \beta_i}(x)dx = \frac{\Gamma(\alpha_i t, z_i\beta_i)}{\Gamma(\alpha_i t)} \hspace{1cm}  t \geq 0, 
\end{align*}
where $f_{\alpha_i t, \beta_i}$ denotes the density function of a gamma distribution with parameter $\alpha_i t$ and $\beta_i$ given in Eq. (\ref{fgamma}) and $\Gamma(\alpha_i t, z_i \beta_i)$ denotes the  incomplete gamma function. For subsequent analysis, the distribution of $\sigma_{L_i}-\sigma_{M_i}$ is used in this paper. According to \cite{Castro2}, the survival function of this variable is given by  
\begin{align*}   \label{doblefunction}
\bar{F}_{\sigma_{L_i}-\sigma_{M_i}} (t) = \int_{x=0}^{\infty}\int_{y=M_i}^{\infty} f_{\sigma_{M_i},X_i(\sigma_{M_i})}(x,y)F_{\alpha_i t, \beta_i}(L_i-y) dy \, dx,
\end{align*}
where $F_{\alpha_i t, \beta_i}$ denotes the distribution function of a gamma distribution with parameters $\alpha_i t$ and $\beta_i$ and $f_{\sigma_{M_i},X_i(\sigma_{M_i})}$ denotes the joint density function of $(\sigma_{M_i}, X_i(\sigma_{M_i}))$ provided in \cite{Bertoin}. 

Given ${\bf W}(0)={\bf x}$ and denoting by ${\bf \sigma_{\bf{L-x}}}$ and ${\bf\sigma_{\bf{M-x}}}$ the following vectors
\begin{eqnarray*}
{\bf \sigma_{\bf{L-x}}} = \left(\sigma_{L_1-x_1}, \sigma_{L_2-x_2}, \ldots, \sigma_{L_m-x_m}\right) \quad \quad 
{\bf \sigma_{\bf{M-x}}} =\left(\sigma_{M_1-x_1}, \sigma_{M_2-x_2}, \ldots, \sigma_{M_m-x_m}\right), 
\end{eqnarray*}
the time to the next maintenance action is given by
\begin{eqnarray} \label{O} 
O &=& \sum_{k=1}^{\infty} T_k \mathbf{1}_{\left\{T_{k-1}<\min(\sigma_{\bf{M-x}})<T_k<\min(\sigma_{\bf{L-x}},Y)\right\}} \\
\nonumber
&+& \sum_{k=1}^{\infty} \min(\sigma_{\bf{L-x}}+\tau,Y+\tau,T_k) \mathbf{1}_{\left\{T_{k-1}< \min (\sigma_{\bf{M-x}}), \, T_{k-1}<\min(\sigma_{\bf{L-x}},Y)<T_k \right\}}, 
\end{eqnarray}
By {\it system renewal} we mean the maintenance time in which all the degrading components are replaced and the time to the next inspection is equal to $T$. Let $Z(t)$ be the time to the next inspection,with $0<Z(t)<T$. Starting with a new brand system at time 0, that is, $({\bf W}(0),Z(0))=({\bf 0}_m,T)$ the time to the next renewal is given by
$$R=\inf\left\{t > 0, \ ({\bf W}(t),Z(t))=({\bf 0}_m, T) \right\}. $$
To simplify, we denote by ${\bf 0}$ a vector of zeros of dimension $m$. 
After a renewal, the system is in the {\it as good as new} initial state and its future evolution does not depend any more on the past. These renewal times are regeneration points for the process describing the evolution of the maintained system.  

However, describing the system state using renewal theory is rather tricky. To deal with it, we take advantage of the semi-regenerative properties of the process considering the {\it maintenance times} as semi-regeneration points. By {\it semi-regenerative cycle} we mean the time between two successive maintenance actions.

\subsection{Evolution of the maintained system}
If $Y$ follows an exponential distribution and the degradation follows a gamma process, after each maintenance action, the system evolution depends on the system state known at this maintenance time. Let $O_k$ the time between the $(k-1)$-th maintenance action and the $k$-th maintenance action with $k=1, 2, \ldots, $ and $O_0=0$. The discrete-time process just after the maintenance time $O_k$
$$A_k=(X_{1}(O_k^+),X_{2}(O_k^+), \ldots, X_{m}(O_k^+), Z(O_k^+))$$ is a Markov Chain with continuous state space 
$$[0,M_1)\times [0,M_2) \times \ldots [0,M_m)\times (0,T]. $$

If the chain $\left\{A_i, \ i=1, 2, \ldots \right\}$  comes back to the state $({\bf 0},T)$ almost surely, that is $({\bf 0}, T)$ is a recurrent state, then $({\bf 0},T)$ is a regeneration point and it proves  (see \cite{Grall}, \cite{Huynh1} and \cite{Mercier2}) the existence of a stationary measure $\pi$ solution of the equation
\begin{equation} \label{infinitepi}
\pi(\cdot)=\int_{0}^{M_1} \int_{0}^{M_2} \ldots \int_{0}^{M_m}\int_{0}^{T}\mathbb{Q}(\cdot |({\bf x},w))\pi (d{\bf x},dw), 
\end{equation}
where $\mathbb{Q}(\cdot |({\bf x},w))$ denotes the kernel given by
\begin{eqnarray*}
\mathbb{Q}((d{\bf y},dv)| ({\bf x},w))&=&P_{({\bf x},w)}(A_1^{+} \in (d{\bf y},dv)) \\ \nonumber
&=& P(A_1^+ \in (d{\bf y},dv)|({\bf W}(0),Z(0))=({\bf x},w)).
\end{eqnarray*} 
${\bf x}=(x_1,x_2, \ldots, x_m)$ and ${\bf y}=(y_1,y_2, \ldots, y_m)$ stand for the vectors of the degradation levels. Starting with $(x_1, x_2, \ldots, x_m, w)$, with $x_i<M_i$ for $i \in I^m$ and $0 < w \leq T-\tau$, the successive inspection times $T_1, T_2, \ldots, $ up to a maintenance time are given by
\begin{equation} \label{inspectiontimes}
    T_1=w, \quad T_{k}=w+(k-1)T, \quad k=1, 2, \ldots
\end{equation}
with $T_0=0$ and time between inspections equals to
$$B_1=T_1-T_0=w, \quad B_k=T_{k+1}-T_k=T, \quad k=1, 2, \ldots. $$
Next result provides conditions to assure that the mean length of a replacement cycle is finite. 

\subsection{Expected time to a renewal cycle}
In this section, a result that assures a finite expected time to the system renewal is given.  The proof has been performed using the same reasoning as in \cite{Mercier2}. 
\begin{lemma}
If $\mu<1$, where $\mu$ is given by
\begin{equation*}
    \mu=1-\bar{F}_Y(T-\tau) \prod_{i=1}^{m} \left(\bar{F}_{\alpha_i \tau,\beta_i}(M_i) F_{\alpha_i (T-\tau),\beta_i}(L_i-M_i)\right), 
\end{equation*}
then the stationary distribution shown in Eq. (\ref{infinitepi}) exists. 
\end{lemma}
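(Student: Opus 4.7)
The plan is to show that the chain $\{A_k\}$ is positive recurrent at the atom $\{({\bf 0}, T)\}$; the stationary measure $\pi$ solving eq. (\ref{infinitepi}) then follows from the semi-regenerative structure. Since consecutive maintenance times differ by at most $T$, it suffices to bound from below, uniformly in the starting state, the one-step probability that $\{A_k\}$ hits $({\bf 0}, T)$.

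I would first compute this probability starting from $A_0 = ({\bf 0}, T)$. Motivated by the splitting $[0, T] = [0, T - \tau] \cup [T - \tau, T]$ implicit in the formula for $\mu$, introduce
\begin{equation*}
E = \{Y > T - \tau\} \cap \bigcap_{i=1}^{m}\bigl(\{X_i(T - \tau) < L_i - M_i\} \cap \{X_i(T) - X_i(T - \tau) > M_i\}\bigr).
\end{equation*}
By the independence of $Y$ from the degradation processes and of the gamma-process increments on disjoint intervals,
\begin{equation*}
P(E) = \bar F_Y(T - \tau) \prod_{i=1}^{m} F_{\alpha_i(T - \tau), \beta_i}(L_i - M_i)\, \bar F_{\alpha_i \tau, \beta_i}(M_i) = 1 - \mu.
\end{equation*}
A case analysis of (\ref{O}) then shows that on $E$ no failure occurs in $(0, T - \tau)$, so the first maintenance falls at the inspection $T_1 = T$, and since every $X_i(T) > M_i$ each component is replaced (preventively, or correctively if it has failed in $(T - \tau, T)$), giving $E \subseteq \{A_1 = ({\bf 0}, T)\}$, i.e. $P_{({\bf 0}, T)}(A_1 = ({\bf 0}, T)) \ge 1 - \mu$.

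For a general starting state $({\bf x}, w)$, I would use a recursive strong-Markov argument: either maintenance is triggered at the first inspection $T_1 = w$ (in which case the event that all $X_i(w) > M_i$ contributes to renewal), or it is not, in which case the process restarts at $T_1^+$ from a state $({\bf X}(w), T)$ with $X_i(w) < M_i$, and the analogue of $E$ from this new state still has probability at least $1 - \mu$ by the monotonicity bounds $\bar F_{\alpha_i \tau, \beta_i}(M_i - y) \ge \bar F_{\alpha_i \tau, \beta_i}(M_i)$ and $F_{\alpha_i(T - \tau), \beta_i}(L_i - y) \ge F_{\alpha_i(T - \tau), \beta_i}(L_i - M_i)$ valid for $y \in [0, M_i)$. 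Iterating and summing the disjoint events ``first maintenance is at $T_k$ and is a renewal'' produces the uniform lower bound $P_{({\bf x}, w)}(A_1 = ({\bf 0}, T)) \ge 1 - \mu$; the standard geometric-trials argument then yields $E_{({\bf x}, w)}[R] \le T/(1 - \mu) < \infty$ for the first return time $R$ to the atom, and hence the existence of $\pi$.

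The main obstacle is the verification that $E \subseteq \{A_1 = ({\bf 0}, T)\}$ via a case-by-case reading of (\ref{O}), in particular confirming that failures of $Y$ or of some $X_i$ inside the final delay window $(T - \tau, T)$ are harmless because the $\min$ rule in (\ref{O}) absorbs them at the inspection $T$ as corrective replacements, together with the rigorous implementation of the recursive step for an arbitrary $({\bf x}, w)$, where one must rule out configurations with $w$ small or some $x_i$ close to $M_i$ in which the first inspection attracts only a partial (and therefore non-renewing) maintenance.
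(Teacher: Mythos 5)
Your core computation is sound and runs parallel to the paper's: the event $E$ is exactly the sub-event the paper uses to bound $P(\xi=1)$ from below (the paper bounds $P_i=P(X_i(T-\tau)\leq L_i,\,X_i(T)\geq M_i)$ below by $\bar F_{\alpha_i\tau,\beta_i}(M_i)F_{\alpha_i(T-\tau),\beta_i}(L_i-M_i)$, which is precisely $P$ of your sub-event), and the monotonicity inequalities you invoke for a general starting state $({\bf x},w)$ are the same ones the paper uses to get the uniform bound $\mathbf{P}_{({\bf x},v)}(\xi>1)\leq\mu$ in its two cases $v>\tau$ and $v\leq\tau$. In fact your treatment of a general starting state can be simplified: the single sub-event ``no failure in $(0,w-\tau)$ and every increment over $(w-\tau,w)$ exceeds $M_i$'' already has probability at least $1-\mu$ and forces the \emph{first} maintenance to be a full renewal at the first inspection, so the ``recursive strong-Markov'' iteration over later inspections is unnecessary for the lower bound.

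There is, however, one genuine gap: the assertion that ``consecutive maintenance times differ by at most $T$,'' and the resulting bound $E_{({\bf x},w)}[R]\leq T/(1-\mu)$. This is false. A maintenance time in this model is an epoch at which some action is actually performed; an inspection at which every degrading component is still below its preventive threshold and the non-degrading part is up triggers no maintenance (see the first sum in Eq.~(\ref{O}), which requires $T_{k-1}<\min(\sigma_{\bf M-x})<T_k$). Hence the gap between consecutive semi-regeneration points $O_{k-1},O_k$ is unbounded, and geometric trials alone do not give finiteness of $E[R]$. The paper sidesteps this by indexing the recursion over \emph{inspection} times, defining $\xi$ as the first inspection index at which a renewal occurs, so that $R\leq \xi T$ deterministically and $E[R]\leq TE[\xi]<\infty$. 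Your route is salvageable, but you must supply a uniform bound on the expected inter-maintenance time; e.g.\ Eq.~(\ref{O}) gives $O\leq\min(\sigma_{\bf L-x},Y)+\tau\leq Y+\tau$, hence $E_{({\bf x},w)}[O_1]\leq 1/\lambda+\tau$, and then a Wald-type argument yields $E[R]\leq(1/\lambda+\tau)/(1-\mu)$. Without some such step the proof of finite mean cycle length is incomplete.
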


\begin{proof}
Starting with a new system at $t=0$ and $({\bf W}(0),Z(0))=({\bf 0},T))$, let $R$ be the time to the next complete replacement, 
\begin{equation*}
   R=\inf\left\{ t \geq 0, ({\bf W}(t), Z(t))=({\bf 0},T)\right\}.
\end{equation*}
Notice that these maintenance times correspond to inspection times in which all the degradation levels exceed the preventive thresholds. Let $\xi$ be the first inspection time in which a replacement is performed, 
\begin{equation*}
    \xi=\inf \left\{n \geq 1, ({\bf W}(T_n), Z(T_n))=({\bf 0},T)\right\}.
\end{equation*}
We compute $P(\xi = 1)$. It corresponds to the situation in which all the degradation levels of the degrading components exceed the preventive threshold at time $T$ and the components do not fail in $(0,T-\tau)$. For the degrading component $i$, it corresponds to 
\begin{eqnarray*}
P_i &=& P(X_i(T-\tau) \leq L_i, \, X_i(T) \geq M_i) \\ &=&P(X_i(T-\tau) \leq L_i, \, X_i(\tau)+X_i(T-\tau) \geq M_i).
\end{eqnarray*}
Hence, 
\begin{eqnarray*}
P_i &=& \int_{0}^{L_i} f_{\alpha_i(T-\tau),\beta_i}(u_i)\bar{F}_{\alpha_i \tau, \beta_i}(M_i-u_i) ~ du_i\\
& \geq &  \bar{F}_{\alpha_i \tau, \beta_i}(M_i) F_{\alpha_i(T-\tau),\beta_i}(L_i).  
\end{eqnarray*}
Therefore, 
\begin{eqnarray*}
P(\xi > 1) &=& 1-P(\xi=1) \\
&\leq & 1-\bar{F}_Y(T-\tau) \prod_{i=1}^{m} \bar{F}_{\alpha_i \tau,\beta_i}(M_i) F_{\alpha_i (T-\tau),\beta_i}(L_i) \\
&\leq & 1-\bar{F}_Y(T-\tau) \prod_{i=1}^{m} \bar{F}_{\alpha_i \tau,\beta_i}(M_i) F_{\alpha_i (T-\tau),\beta_i}(L_i-M_i).
\end{eqnarray*}
For all $n \in \mathbb{N}^*$, by the Markov property as it appears in \cite{Mercier2} we get
\begin{eqnarray*}
P(\xi > n+1) &=&  P(\xi > n+1, \xi >n ) \\
&=& \mathbb{E}\left[\mathbf{1}_{\left\{\xi > n\right\}}\mathbb{E}[\mathbf{1}_{\left\{\xi > n+1\right\}}|\mathcal{F}_{T_n}]\right] \\
&=& \mathbb{E}\left[\mathbf{1}_{\left\{\xi > n\right\}} h({\bf W}(T_n), Z(T_n))\right],
\end{eqnarray*}
with 
$$h({\bf x},v)= \mathbb{E}\left[\mathbf{1}_{\left\{ \xi > n+1  \right\}} | ({\bf W}(T_n), Z(T_n))=({\bf x},v)  \right]=\mathbf{P}_{({\bf x},v)}(\xi > 1).$$
That is, the future evolution of the system after time $T_n$ only depends on the state at time $T_n$.

For $v > \tau$, we get that
\begin{eqnarray*}
\mathbf{P}_{({\bf x},v)}(\xi >1) &=&  1-\bar{F}_Y(v-\tau) \prod_{i=1}^{m} \int_{0}^{L_i-x_i} f_{\alpha_i(v-\tau),\beta_i}(u_i)\bar{F}_{\alpha_i \tau, \beta_i}(M_i-x_i-u_i) ~ du_i\\
& \leq & 1-\bar{F}_Y(v-\tau) \prod_{i=1}^{m} \left(\bar{F}_{\alpha_i \tau,\beta_i}(M_i) F_{\alpha_i (v-\tau),\beta_i}(L_i-M_i)\right) \\
& \leq & 1-\bar{F}_Y(T-\tau) \prod_{i=1}^{m} \left(\bar{F}_{\alpha_i \tau,\beta_i}(M_i) F_{\alpha_i (T-\tau),\beta}(L_i-M_i)\right)=\mu. 
\end{eqnarray*}
For $v \leq \tau$, we get that
\begin{eqnarray*}
\mathbf{P}_{({\bf x},v)}(\xi >1) & \leq & 1-\bar{F}_Y(v) \prod_{i=1}^{m} \bar{F}_{\alpha_i v, \beta_i}(M_i-x_i) \\
& \leq & 1-\bar{F}_Y(\tau) \prod_{i=1}^{m}\bar{F}_{\alpha_i \tau, \beta_i}(M_i)  \\
    & \leq & 1-\bar{F}_Y(T-\tau) \prod_{i=1}^{m}\bar{F}_{\alpha_i \tau, \beta_i}(M_i)  F_{\alpha_i (T-\tau),\beta_i}(L_i-M_i)=\mu,  
\end{eqnarray*}
since $\tau \leq T-\tau$.
Hence, 
$$P(\xi > n+1) \leq \mathbf{E}\left[\mathbf{1}_{\left\{\xi > n\right\}} \mu \right]=\mu P(\xi > n), $$
for all $n \geq 1$, and consequently
$$P(\xi > n) \leq \mu^n, \quad \forall n \geq 1. $$
It provides that
\begin{equation*}
    \mathbb{E}\left[\xi\right]=\sum_{n=0}^{\infty} P(\xi > n) \leq \sum_{n=0}^{\infty} \mu^n < \infty. 
\end{equation*}
If $\mu<1$, then $\left\{\left(X_1(t), X_2(t), \ldots, X_m(t),Z(t)\right), t \geq 0\right\}$ is a regenerative process with bounded mean length cycle
\begin{equation*}
    \mathbb{E}\left[\sum_{k=1}^{\xi} T_k\right] \leq \mathbb{E}(\xi) T. 
\end{equation*}
\end{proof}
\subsection{Transition kernel of the semi-regenerative process}
Given $({\bf x},w)$, the kernel $\mathbb{Q}(\cdot |({\bf x},w))$ 
$$\mathbb{Q}\left(({\bf dy}, dv)|({\bf x},w)\right)=P\left(A_1^{+} \in ({\bf dy}, dv)|(W(0),Z(0))=({\bf x},w)\right), $$
is next obtained. 
As in \cite{Zhang2}, in a maintenance time, each degrading component of the system can be  into one of the following disjoint sets ($A \cup B \cup C = \Omega$, where $\Omega$ denotes the set of the degrading components). 
\begin{itemize}
\item A: components whose degradation do not exceed their preventive thresholds (``{\it healthy components}''). 
\item B: components whose degradation levels exceed their preventive thresholds but not their failure threshold (``{\it worn-out components}''). 
\item C: components whose degradation exceed the failure threshold (``{\it failed component}'').  
\end{itemize} 
To evaluate the transition kernel, we consider three cases:
\begin{itemize}
    \item {\bf Case 1.} All the degrading components are maintained in $O_1$. It means that $A=\emptyset$, $B \cup C=\Omega$. Just after the maintenance, all the degradation levels reset to zero. 
    \item {\bf Case 2.} None of the degrading components are maintained in $O_1$. It means that 
$A=\Omega$, $B=C=\emptyset$. In $O_1^+$, the degradation levels of the degrading components are  ${\bf y}=(y_1,y_2, \ldots, y_m)$, with $0 \leq x_i \leq y_i<M_i$ for $i \in I^m$. 
    \item {\bf Case 3.} Some degrading components are maintained and others are left as they were in $O_1$. It means that $A\neq \emptyset$, $B \cup C \neq \emptyset$. Let $D$ be the set of maintained degrading components, hence ${\bf y}=(y_1,y_2, \ldots, y_m)$, with $y_i=0$ for $i \in D$ and $0 \leq x_i \leq y_i <M_i$ for $i \in D^c$.
\end{itemize}
Next, we compute the probabilities associated with Case 1, Case 2 and Case 3. 
\subsubsection*{Case 1}
The degradation of the degrading components exceed the preventive thresholds in a maintenance time. That is, 
$$\left\{T_{k-1} < \min({\bf \sigma}_{{\bf M-x}})<\max(\sigma_{\bf{M-x}})<\min(\sigma_{\bf{L-x}}+\tau,Y+\tau,T_k), \quad T_{k-1} <Y\right\},  $$
for $k=1, 2, \ldots$. 
Two scenarios are envisioned for Case 1, depending if the components fail or not.  

  \begin{enumerate}
      \item None of the components fail in $(T_{k-1},T_k)$ for $k=1, 2, \ldots $. That is, 
      \begin{equation*} \label{case11}
      \left\{T_{k-1} < \min(\sigma_{\bf{M-x}})<\max(\sigma_{\bf{M-x}})<T_k<\min(\sigma_{\bf{L-x}},Y)\right\}. 
      \end{equation*}
      with probability equals to
      \begin{eqnarray*}
      \mathbb{P}_{(1,1)}^k &=& \prod_{i=1}^{m} \left(\int_{T_{k-1}}^{T_k} f_{\sigma_{M_i-x_i}}(u_i)\bar{F}_{\sigma_{L_i-x_i}-\sigma_{M_i-x_i}}(T_k-u_i) du_i \right)\bar{F}_Y(T_k) \\
      &=&  \left(\prod_{i=1}^{m} G_i(T_{k-1},T_k,T_k)\right)\bar{F}_Y(T_k), 
      \end{eqnarray*}
      where 
      $G_i(T_{k-1},T_k,T_k)$ is given by 
      \begin{eqnarray} \label{G}
    G_i(w_1,w_2,w_3) &=& \int_{w_1}^{w_2}f_{\sigma_{M_i-x_i}}(u_i) \bar{F}_{\sigma_{L_i-x_i}-\sigma_{M_i-x_i}}(w_3-u_i)~du_i, \quad w_1<w_2.  
\end{eqnarray}
\item Failure in $(T_{k-1},T_k)$, for $k=1, 2, \ldots$ Two cases are envisioned: 
      \begin{enumerate} 
      \item Failure in $ t \in (T_{k-1},T_k-\tau)$. The maintenance action is performed between inspections and the next inspection is scheduled $v=T_k-\tau-t$ units after. 
      \item Failure in $(T_k-\tau,T_k)$ The maintenance action is performed in $T_k$ and the next inspection is scheduled $v=T$ units after. 
      \end{enumerate}
We start with (a). The component fails in $ t \in (T_{k-1},T_k-\tau)$ and $max(\sigma_{M-x})<t+\tau$. It happens with the following probability
      \begin{eqnarray} \label{reduced1}
&& \mathbf{1}_{\left\{t+\tau<T_k\right\}}\int_{T_{k-1}}^{t+\tau} du_1 \int_{T_{k-1}}^{t+\tau} du_2 \ldots  \int_{T_{k-1}}^{t+\tau} du_m    \left(-\frac{d}{dt}A(t)\right), 
      \end{eqnarray}
  where
      \begin{equation} \label{eqA}
          A(t)=\left(\bar{F}_Y(t) \prod_{i=1}^{m} f_{\sigma_{M_i-x_i}}(u_i)\bar{F}_{\sigma_{L_i-x_i}-\sigma_{M_i-x_i}}(t-u_i)\right). 
      \end{equation}
 The probability of this event is equal to
      \begin{eqnarray*}
 \mathbb{P}_{(1,2)}^k(dv) = \mathbf{1}_{{\left\{v<T-\tau\right\}}} \left(\int_{T_{k-1}}^{T_k-v} du_1 \int_{T_{k-1}}^{T_k-v} du_2 \ldots \int_{T_{k-1}}^{T_k-v} du_m   \left(-\frac{d}{dv}  A(T_k-v-\tau) \right) \right),
\end{eqnarray*}
where $A(t)$ is given by Eq. (\ref{eqA}).
For scenario (b), a component fails in $(T_k-\tau,T_k)$ and all the degrading components are replaced at time $T_k$. It happens with the following probability:
\begin{eqnarray*}
\label{eq2} 
&& \mathbb{P}_{(1,3)}^k=\int_{T_{k-1}}^{T_k} du_1 \int_{T_{k-1}}^{T_k} du_2  \ldots  \int_{T_{k-1}}^{T_k} du_m  \int_{T_k-\tau}^{T_k}  \left(-\frac{d}{dv}A( v)\right) dv.  
\end{eqnarray*}
For the particular case $k=1$ and $w<\tau$, the probability that all the degrading components are replaced at time $T_1=w$ is given by 
\begin{equation*} 
\label{eq3}
\mathbb{P}^*(w)=\prod_{i=1}^{m}\bar{F}_{\alpha_i w, \beta_i}(M_i).
\end{equation*}
Given $({\bf x}, w)$ with inspection times given by Eq.(\ref{inspectiontimes}), the expression for the kernel when all the degrading components are replaced in a maintenance time is given by
\begin{eqnarray} \nonumber
    \mathbb{Q}_1((d {\bf y},dv)|({\bf x},w))&=&  \prod_{i=1}^{m}  \delta_0(dy_i)\left(\mathbb{P}_{(1,1)}^1\delta_T(dv) +\mathbb{P}_{(1,2)}^1(dv)+\mathbb{P}_{(1,3)}^1\delta_T(dv)\right)\mathbf{1}_{\left\{w \geq \tau\right\}} \\ \nonumber
    &+& 
    \prod_{i=1}^{m} \delta_0(dy_i)\sum_{k=2}^{\infty}\left(\mathbb{P}_{(1,1)}^k\delta_T(dv) +\mathbb{P}_{(1,2)}^k(dv)+\mathbb{P}_{(1,3)}^k\delta_T(dv)\right) \\ \label{Q1}
    &+& 
    \prod_{i=1}^{m} \delta_0(dy_i) \mathbb{P}^*(w) \delta_T(dv) \mathbf{1}_{\left\{w<\tau\right\}}.
\end{eqnarray}
 \end{enumerate}   
\subsubsection*{Case 2}
All the components of ${\bf y}=(y_1,y_2, \ldots, y_m)$ are strictly greater than zero in $O_1^+$. The maintenance time is triggered by the failure of the non-degrading {\color{black} part}. 
\begin{enumerate}
    \item  Non-degrading {\color{black} part fails} in $ t \in (T_{k-1},T_k-\tau)$,  Hence the time to the next inspection is equal to $T_k-\tau-t$.   
    \item Non-degrading {\color{black} part fails} in $ t \in (T_{k}-\tau,T_k)$, hence the time to the next inspection is $v=T$. 
\end{enumerate}
As in \textbf{Case 1}, $k=1$ with $w<\tau$ is a particular case. \\
Starting with $({\bf x},w)$, the probability that the non-degrading {\color{black} part fails} at time $t$ with $T_{k-1}<t<T_{k}-\tau$, and the degradation levels of the degrading components are equal to $(y_1,y_2, \ldots, y_m)$ in $t+\tau$ is equal to 
\begin{equation*}
\mathbb{P}_{(2,1)}^k(d {\bf y}, dv)=\mathbf{1}_{\left\{T_{k-1}<T_k-\tau-v<T_k-v<T_k\right\}}f_Y(T_k-\tau-v)~dv\prod_{i=1}^{m}f_{\alpha_i(T_k-v),\beta_i}(y_i-x_i)~dy_i, 
\end{equation*}
where $v=T_k-\tau-t$. 

If the non-degrading {\color{black} part fails} in $(T_k-\tau, T_k)$, maintenance intervention is performed at time $T_k$. The probability that the non-degrading {\color{black} part fails} in $(T_k-\tau, T_k)$ and the degradation level of the degrading component $i$ is $(y_i,y_i+dy_i)$, for $i \in I^m$ with $0<x_i<y_i<M_i$ at time $T_k$ is equal to
\begin{eqnarray*}\label{out1}
\mathbb{P}_{(2,2)}^k(d {\bf y})=\left(\int_{T_k-\tau}^{T_k}f_Y(z)dz\right) \prod_{i=1}^{m}f_{\alpha_iT_k, \beta_i}(y_i-x_i)~dy_i. 
\end{eqnarray*}
If $T_1=w<\tau$, then the probability that the non-degrading {\color{black} part fails} in $(0, w)$ and the degradation level of the degrading component $i$ is $(y_i,y_i+dy_i)$, for $i \in I^m$ with $0<x_i<y_i<M_i$ at time $T_1=w$ is equal to
\begin{equation*} \label{out2}
\mathbb{P}^*(d{\bf y})=F_Y(w) \prod_{i=1}^{m}f_{\alpha_i w, \beta_i}(y_i-x_i)~dy_i. 
\end{equation*}
Hence, the kernel for case 2 is equal to
\begin{eqnarray} \nonumber
     \mathbb{Q}_2((d {\bf y},dv)|{({\bf x},w))} &=&  \mathbb{P}^*(d{\bf y}) \mathbf{1}_{\left\{w < \tau \right\}}\delta_T(dv) + \left(\mathbb{P}_{(2,1)}^1(d{\bf y},dv)+\mathbb{P}_{(2,2)}^1(d{\bf y})\delta_T(dv)\right)  \mathbf{1}_{\left\{w \geq \tau \right\}}\\ \label{Q2}
     &+& \sum_{k=2}^{\infty} \left( \mathbb{P}_{(2,1)}^k(d {\bf  y},dv)+\mathbb{P}_{(2,2)}^k(d{\bf y})\delta_T(dv)\right). 
\end{eqnarray}
\subsubsection*{Case 3}
Some degrading components are replaced and the rest are left as they are in $O_1^+$. That is, $A \neq \emptyset$, $B \cup C \neq \emptyset$. It means that some components of vector {\bf y} are strictly greater than zero and the rest are equal to zero. 
    Let $D$ be the set of indexes of the vector {\bf y} equals to zero. As in \textbf{Case 1}, different scenarios are envisioned:
\begin{enumerate}
    \item None of the degrading components fail in $(T_{k-1},T_k)$ but a maintenance action is performed in $T_k$.
    \item {\color{black} Failure} in $(T_{k-1},T_k)$. 
    \begin{itemize}
        \item {\color{black} Failure} in $t \in (T_{k-1},T_k-\tau)$, then the maintenance action is performed in $t+\tau$ and the next inspection is scheduled $T_k-t-\tau$ time units after. 
        \item {\color{black} Failure} in $t \in (T_k-\tau, T_k)$. The maintenance action is performed in $T_k$ and the next inspection is scheduled $T$ time units after. 
    \end{itemize}
\end{enumerate} 
We start with case 1. If none of the components fail in $(T_{k-1},T_k)$, the probability that some degrading components are replaced in an inspection time and the rest are left as they are and none failure happens between inspections is equal to 
\begin{eqnarray*}
\mathbb{P}_{(3,1)}^k(d {\bf y}) &=& \bar{F}_Y(T_k)  \prod_{i \in D} \left(\int_{T_{k-1}}^{T_k}f_{\sigma_{M_i-x_i}}(u_i)\bar{F}_{\sigma_{L_i-x_i}-\sigma_{M_i-x_i}}(T_k-u_i) du_i \right) \prod_{j \in D^c}{f}_{\alpha_j T_k, \beta_j}(y_j-x_j)~dy_j \\
&=& \bar{F}_Y(T_k)  \prod_{i \in D} G_{i}(T_{k-1},T_k,T_k)\prod_{j \in D^c}{f}_{\alpha_j T_k, \beta_j}(y_j-x_j)~dy_j,   
\end{eqnarray*}
with $x_j<y_j$ for $j \in D^c$. 

We start with the first case. The probability that a {\color{black}degrading} component ({\color{black} or the non-degrading part}) fails in $(t,t+dt)$ with $T_{k-1}<t<T_k-\tau$ and some preventive thresholds are exceed at the maintenance intervention time $t+\tau$ is given by
\begin{eqnarray*}
&& { {\bf 1}_{\lbrace T_{k-1} < t < T_{k}-\tau \rbrace}} \int_{T_{k-1}}^{t+\tau} du_1 \int_{T_{k-1}}^{t+\tau} du_2  \ldots \int_{T_{k-1}}^{t+\tau} du_m \frac{-d}{dt} B(t)   \prod_{j \in D^c}{f}_{\alpha_j(t+\tau), \beta_j}(y_j-x_j)dy_j,  
\end{eqnarray*}
where
\begin{equation*}
    B(t)=\bar{F}_Y(t)\prod_{i \in D}^{} f_{\sigma_{M_i-x_i}}(u_i)\bar{F}_{\sigma_{L_i-x_i}-\sigma_{M_i-x_i}}(t-u_i) du_i. 
\end{equation*}
In this case, next inspection is scheduled $T_k-\tau-t$ time units after. Denoting by $v=T_k-\tau-t$, the probability that some degrading components are replaced in a maintenance time and the time to the next inspection time is $(v,v+dv)$ is equal to
\begin{eqnarray*}
&& \mathbb{P}_{(3,2)}^k(dv, d {\bf y})= \int_{T_{k-1}}^{T_k-v} du_1 \int_{T_{k-1}}^{T_k-v} du_2 \ldots  \int_{T_{k-1}}^{T_k-v} du_m \frac{-d}{dv} B(T_k-v-\tau) \prod_{j \in D^c} {f}_{\alpha_j(T_k-v), \beta_j}(y_j-x_j)  ~dy_j 
\end{eqnarray*}
for $y_j>x_j$, for $j \in A^c$. If it fails in $(T_k-\tau,T_k)$, maintenance intervention is performed at time $T_k$. The probability that a {\color{black}degrading component (or the non-degrading part)} fails in $(T_k-\tau,T_k)$ and some degrading components are replaced at time $T_k$ is given by
\begin{eqnarray*}
&& \mathbb{P}_{(3,3)}^k(d {\bf y})=  \int_{T_{k-1}}^{T_k} du_1 \int_{T_{k-1}}^{T_k} du_2 \ldots \int_{T_{k-1}}^{T_k} du_m
\int_{v=T_k-\tau}^{T_k} -\frac{d}{dv} B(v) ~dv \prod_{j \in D^c} {f}_{\alpha_j T_k, \beta_j}(y_j-x_j) ~dy_j,  
\end{eqnarray*}
for $y_j>x_j$, for $j \in A^c$. In the particular case that $w<\tau$, we get that the probability that some degrading components are replaced and the rest are left as they are is equal to
$$\mathbb{P}^*(d{\bf y})=\prod_{i \in D} \bar{F}_{\alpha_i w, \beta_i}(M_i) \prod_{j \in D^c} f_{\alpha_jw, \beta_j}(y_j-x_j) {dy_j}. $$
Finally, 
\begin{eqnarray} \nonumber
    \mathbb{Q}_3((d {\bf y}, dv)|{ ({\bf x},w))} &=& \prod_{i \in D} \delta_0(dy_i) \sum_{k=2}^{\infty}\left(\mathbb{P}_{(3,1)}^k(d {\bf y}) \delta_T(dv)+\mathbb{P}_{(3,2)}^k(d {\bf y},dv)+\mathbb{P}_{(3,3)}^k(d {\bf y})\delta_T(dv)\right) \\ \nonumber
    &+& \prod_{i \in D} \delta_0(dy_i) \left(\mathbb{P}_{(3,1)}^1(d {\bf y}) \delta_T(dv)+\mathbb{P}_{(3,2)}^1(d {\bf y},dv)+\mathbb{P}_{(3,3)}^1(d {\bf y})\delta_T(dv)\right)\mathbf{1}_{\left\{w \geq\tau\right\}} \\ &+& \mathbb{P}^*(d{\bf y}) \delta_T(dv)\mathbf{1}_{\left\{w < \tau\right\}}. \label{Q3}
\end{eqnarray}
\begin{proposition}
The transition kernel of the semi-regenerative process $({\bf W}(t), Z(t))$ is given by
$$\mathbb{Q}((d {\bf y}, dv)|({\bf x},w))=\mathbb{Q}_1((d {\bf y}, dv)|({\bf x},w))+\mathbb{Q}_2((d {\bf y}, dv)|({\bf x},w))+\mathbb{Q}_3((d {\bf y}, dv)|({\bf x},w)), $$
\end{proposition}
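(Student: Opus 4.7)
The plan is to prove the decomposition by a straightforward partition-of-the-state-space argument together with the law of total probability. At the first maintenance instant $O_1^+$, the state $(\mathbf{y},v)$ of the chain $A_1$ lies in $[0,M_1)\times\cdots\times[0,M_m)\times(0,T]$, and this space is partitioned according to which coordinates of $\mathbf{y}$ equal zero: either $\mathbf{y}=\mathbf{0}$ (Case 1, all degrading components replaced), all coordinates of $\mathbf{y}$ are strictly positive (Case 2, no degrading component replaced, so the maintenance must have been triggered by a failure of the non-degrading part), or $\mathbf{y}$ has some zero and some strictly positive coordinates indexed by a proper nonempty subset $D\subsetneq\Omega$ of replaced components (Case 3). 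These three regions are measurably disjoint, so the kernel $\mathbb{Q}(\cdot|(\mathbf{x},w))$ equals the sum of its restrictions to these regions, and one only needs to identify each restriction with the corresponding $\mathbb{Q}_i$.

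Next I would check that within each region the expressions in Eqs. (\ref{Q1}), (\ref{Q2}), (\ref{Q3}) already account for every way the triggering event can occur. Within a given case, the ``atoms'' are indexed by: the inspection interval $(T_{k-1},T_k)$ that contains the triggering event; the location of that event (in $(T_{k-1},T_k-\tau)$, giving a corrective-maintenance branch with $v=T_k-\tau-t<T-\tau$, or in $(T_k-\tau,T_k)$, giving an inspection-maintenance branch with $v=T$); and the physical nature of the event (exceedance of some preventive threshold with no failure for Case 1 sub-event (1), failure of a degrading component for Cases 1 and 3, or failure of the non-degrading part for Cases 2 and 3). These sub-events are pairwise disjoint, so their probabilities add. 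Summation over $k\geq 1$ reproduces the countable sums in (\ref{Q1})--(\ref{Q3}), with the separate treatment of $k=1$ when $w<\tau$ handled by the indicator $\mathbf{1}_{\{w<\tau\}}$ (no corrective-delay branch is available before $T_1$ in that regime, so the formulas $\mathbb{P}^*$ apply) and of $k=1$ when $w\geq\tau$ handled by $\mathbf{1}_{\{w\geq\tau\}}$ (the generic formulas apply with $T_0=0$ and $T_1=w$).

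The principal obstacle is not conceptual but bookkeeping: one must verify that the product structure of each $\mathbb{Q}_i$ correctly couples the Dirac masses $\delta_0(dy_i)$ on replaced coordinates, the density factors $f_{\alpha_j t,\beta_j}(y_j-x_j)\,dy_j$ on unreplaced coordinates (with $t$ matching the actual elapsed time to the maintenance action), the Dirac $\delta_T(dv)$ on the inspection branches, and the $dv$-density on the corrective branches. This coupling is forced by independence of the degradation processes and of the non-degrading-part failure time, conditional on the triggering event: given that a particular component is the trigger at time $t$, the joint distribution of the other components at the maintenance instant factorizes, since the gamma processes grow independently and $Y$ is independent of them. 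Once this factorization is invoked case by case, the identification of the restriction of $\mathbb{Q}$ with $\mathbb{Q}_i$ is immediate, and the proposition follows by additivity.
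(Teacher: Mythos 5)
Your proposal is correct and follows essentially the same route as the paper: the paper's ``proof'' is precisely the preceding case analysis, partitioning the post-maintenance state by which degrading components are replaced (all, none, or a proper nonempty subset $D$), enumerating within each case the disjoint sub-events indexed by the inspection interval, the location of the trigger relative to $T_k-\tau$, and the special regime $w<\tau$, and then summing. The only difference is presentational: you make explicit the measurable disjointness of the three regions of the state space and the conditional-independence factorization that the paper leaves implicit.
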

where $\mathbb{Q}_1$, $\mathbb{Q}_2$ and $\mathbb{Q}_3$ are given by Eqs. (\ref{Q1}), (\ref{Q2}) and (\ref{Q3}) respectively. 
\section{The objective cost function} \label{probability}
The semi-regenerative properties of the maintained system allow for studying the asymptotic
behaviour of the expected cost focused on a semi-regenerative cycle. We remind that by semi-regenerative cycle we mean the time between two maintenance actions. Let $C(t)$ be the cost of the system at time $t$. 
Using Proposition 4.1. in \cite{Berenguer}, for a semi-regenerative process with an unique stationary probability distribution $\pi$ (Eq. \ref{infinitepi}), 
\begin{equation} \label{cinfinito}
C_{\infty}=\lim_{t \longrightarrow \infty} \frac{\mathbb{E}[C(t)]}{t}= \frac{\mathbb{E}_{\pi}[C(O_1)]}{\mathbb{E}_{\pi}[O_1]}, 
\end{equation}
where $O_1 \equiv O$  stands for the time to a maintenance time given by Eq. (\ref{O}).

Developing (\ref{cinfinito}):
\begin{eqnarray} \label{costfunction}
C_{\infty}(T, M_1, M_2, \ldots, M_m) &=& \frac{E_{\pi}[C^p(O_1)]}{E_{\pi}[O_1]}
+ \frac{E_{\pi}[C^c(O_1)]}{E_{\pi}[O_1]} \\ \nonumber &+&  \frac{E_{\pi}[C^{nm}(O_1)]}{E_{\pi}[O_1]}+\frac{E_{\pi}[C(I(O_1))]}{E_{\pi}[O_1]}+\frac{E_{\pi}[D(O_1)]}{E_{\pi}[O_1]}-\frac{E_{\pi}[R(O_1)]}{E_{\pi}[O_1]}, 
\end{eqnarray}
where, given $\pi$, $E_{\pi}[C^p(O_1)]$ and $E_{\pi}[C^c(O_1)]$  stand for the expected costs due to the preventive and corrective maintenance of the degrading components in a semi-regenerative cycle,  $E_{\pi}[C^{nm}(O_1)]$ denotes the expected cost due to the corrective replacements of the non-degrading {\color{black} part} in a semi-regenerative cycle, $E_{\pi}[C(I(O_1))]$ corresponds to the expected cost due to inspections, $E_{\pi}[C(D(O_1))]$ the expected cost due to the downtime, $E_{\pi}[R(O_1)]$ stands for the expected reward obtained in a semi-regenerative cycle and $E_{\pi}[O_1]$ denotes the expected length of a semi-regenerative cycle. 
All the terms involved in Eq. (\ref{costfunction}) are developed in the { Appendix} of this paper. 

The unconstrained optimization problem is to find the values $(T, M_1, M_2, \ldots, M_m)$ that minimize $C_{\infty}(\cdot)$ given in Eq. (\ref{costfunction}). That is, finding the values $(T_{opt}, M_{1,opt}, M_{2,opt}, \ldots, M_{m,opt})$ such that
$$C(T_{opt}, M_{1,opt}, M_{2,opt}, \ldots, M_{m,opt})=\inf\left\{C_{\infty}(T, M_1, M_2, \ldots, M_m),  \quad T > 2\tau, \quad M_i \leq L_i, \quad i \in I^m\right\}. $$

We assume that a safety requirement that bounds the probability of a critical situation is imposed. 
{\it A critical situation} happens in this system when all the degrading components fail in a semi-regenerative cycle. If all the degrading components correspond to the mechanical part of the system   and they are put in operation under a parallel configuration, a critical situation corresponds to the failure of the whole mechanical. A real life example of this parallel configuration is found in a gas company that pumps and distributes gas. Critical components are subject to degradation and, to ensure a high availability of the pumps, critical components are placed in a parallel setting \cite{Keizer}. 
 
We can approach the search of the optimal maintenance strategy {\color{black} bounding} the probability of a {\it critical situation} for the system. {\color{black} By {\it critical situation} we mean that all the degrading components fail between two successive maintenance actions}. The safety constraint requires that the probability that all the degrading components fail between two successive maintenance actions is bounded by a value $w$ where $w$ represents the safety probability limit \cite{Aven} with $0<w<1$.

Given $({\bf x},v)$, a critical situation {\color{black} between $T_{k-1}$ and $T_k$} corresponds to the following event
\begin{equation} \label{eqrestriccion}
C_{x,v}=\left\{T_{k-1} < \min(\sigma_{{\bf M-x}})<\min(\sigma_{{\bf L-x}},Y)<\max(\sigma_{{\bf L-x}})<\min(\sigma_{{\bf L-x}}+\tau,Y+\tau,T_k)\right\}, 
\end{equation}
for $k=1, 2, \ldots$. Hence, the probability of a critical situation in a semiregenerative cycle is given by
\begin{equation} \label{criticalsituation}
P_{\pi}(T,M_1,M_2, \ldots, M_m) = \int \pi(dx, dv) \sum_{k=1}^{\infty}P(C_{x,v}). 
\end{equation}
Figure \ref{probcritica} shows the probability of a critical situation with a time between inspections equals to $T=100$ time units and parameters 
$M_i=3$, $L_i=6$, $\lambda=0.025$, $\beta_i=1$ for $i \in I^m$, 
and $\alpha_1=0.2$, $\alpha_2=0.3$, $\alpha_3=0.4$, $\alpha_4=0.5$, and  $\alpha_5=0.6$ versus $\tau$. As we can check in Figure \ref{probcritica}, the probability of a critical situation is increasing in $\tau$ and decreasing with the number of degrading components $m$. 

\begin{figure}[H]
\begin{center} 
 \includegraphics[scale=0.55]{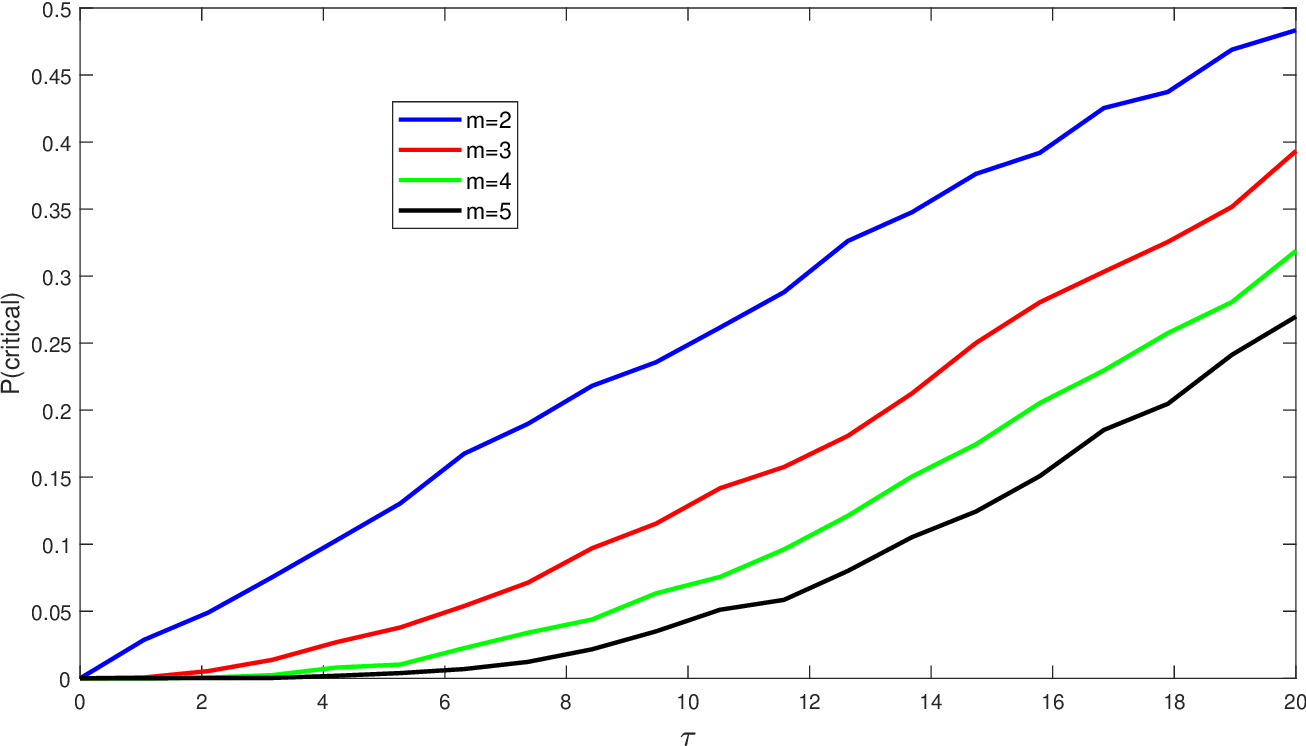}
  \caption{Probability of a critical situation versus $\tau$.}
  \label{probcritica}
 \end{center} 
\end{figure}

The constrained optimization problem is to find the values $(T, M_1, M_2, \ldots, M_m)$ that minimize $C_{\infty}(\cdot)$ given in Eq. (\ref{costfunction}) under the safety constraint. That is, finding the values $(T_{opt}, M_{1,opt}, M_{2,opt}, \ldots, M_{m,opt})$ such that
\begin{equation} \label{constrainedproblem}
C(T_{opt}, M_{1,opt}, M_{2,opt}, \ldots, M_{m,opt})=\inf\left\{C_{\infty}(T, M_1, M_2, \ldots, M_m),  \quad (T, M_1, M_2, \ldots, M_m) \in \mathbb{D}\right\},  
\end{equation}
where $\mathbb{D}$ is the set of vectors $(T, M_1, M_2, \ldots, M_m)$ that fulfills Eq. (\ref{criticalsituation}). That is, 
\begin{equation*}
\mathbb{D}=\left\{(T, M_1, M_2, \ldots, M_m); \quad T > 2 \tau,  \, M_i \leq L_i, \, P_{\pi}(T,M_1, \ldots, M_m) \leq w \right\}
\end{equation*}
Due to the complexity of the probability given by Eq.(\ref{criticalsituation}), there is no hope here to find analytic conditions that could ensure the monotony of Eq. (\ref{criticalsituation}) with respect to $T$ or with respect to $(M_1, M_2, \ldots, M_m)$. Hence, resolution of the constrained optimization problem given in Eq. (\ref{constrainedproblem}) requires the use of numerical methods. Numerical examples of the constrained optimization problem are given in Section \ref{numerical}. Below, the analytic development of the terms included in Eq. (\ref{costfunction}) are explained.

\subsection{Expected reward in a semi-regenerative cycle}
Each working degrading component provides a reward. This reward depends on the component deterioration: when the deterioration level of this degrading component increases the reward decreases. A classical exponential reward function given in \cite{Niese} is used in this paper. {\color{black} For other reward functions see for instance \cite{Mercier} and \cite{Xiang}}.

The reward function $g_i$ for the component $i$, given its deterioration level $x$, is equal to
\begin{equation} \label{reward}
    g_i(x)=\theta_0+h\exp{(-\gamma_i x)}, \quad  \quad i=1,2, \ldots,m, 
\end{equation}
where $\theta_0, h, \gamma_i\geq 0$. The reward function $g(\cdot)$ given in Eq. (\ref{reward}) is decreasing in $x$, with maximum $\theta_0+g$ and minimum $\theta_0$ and the expected reward rate in an interval $[0,T]$ for a process $\left\{X(t), \, t \geq 0\right\}$ is given by
\begin{equation} \label{rewardrate}
 \frac{\displaystyle{\int_{0}^T} \mathbb{E}\left[g(X(t))\right]dt}{T}, \quad T > 0. 
\end{equation}

%
%

For different gamma processes with parameters $(\alpha_i,\beta_i)$ with $i \in I^m$, the expected reward rate in $[0,T]$ given by Eq. (\ref{rewardrate}) can be comparable for $\gamma_i=\gamma$ and $i \in I^m$. Since function $-g$ given in (\ref{reward}) is increasing and concave, if 
\begin{equation*}
\alpha_1 \leq \alpha_2 \leq \alpha_3 \ldots \leq \alpha_m, \quad  \beta_1 \leq \beta_2 \leq \beta_3 \ldots \leq \beta_m, \quad \alpha_1/\beta_1 \leq \alpha_2/\beta_2 \leq \ldots \leq \alpha_m/\beta_m, 
\end{equation*}
then the variables $X_1(s), X_2(s), \ldots, X_m(s)$ can be ordered under the increasing convex order using a result given by \cite{Muller}, that is, 
$$X_1(s) \prec_{icv} X_2(s) \prec_{icv} \ldots \prec_{icv} X_m(s), \quad \forall s.  $$
Using the definition of increasing convex order, we get that
\begin{equation}
\frac{\displaystyle{\int_{0}^{T}\mathbb{E}[-g_1(X_1(s))]ds}}{T} \leq \frac{\displaystyle{\int_{0}^{T}\mathbb{E}[-g_2(X_2(s))]ds}}{T}  \leq \ldots \leq \frac{\displaystyle{\int_{0}^{T}\mathbb{E}[-g_m(X_m(s))]ds}}{T},   
\end{equation}
hence, 
\begin{equation}
\frac{\displaystyle{\int_{0}^{T}\mathbb{E}[g_m(X_m(s))]ds}}{T} \leq \frac{\displaystyle{\int_{0}^{T}\mathbb{E}[g_{m-1}(X_{m-1}(s))]ds}}{T}  \leq \ldots \leq \frac{\displaystyle{\int_{0}^{T}\mathbb{E}[g_1(X_1(s))]ds}}{T},   
\end{equation}
for $T>0$. Figure \ref{fig:expectation} shows the expected reward per unit time given in Eq. (\ref{rewardrate}) for five gamma processes with parameters $\alpha_i=1+(i-1)*0.1, \quad \beta_i=1, \quad i=1, 2, 3, 4, 5$  versus $T$. Furthermore, the parameters of function $g$ shown in (\ref{reward}) are given by $\gamma_i=2$, $\theta=2$ and $h=2$. As we can check, the expected  reward rate decreases when the parameter $\alpha$ increases. 

\begin{figure}[H]
\begin{center} 
 \includegraphics[scale=0.55]{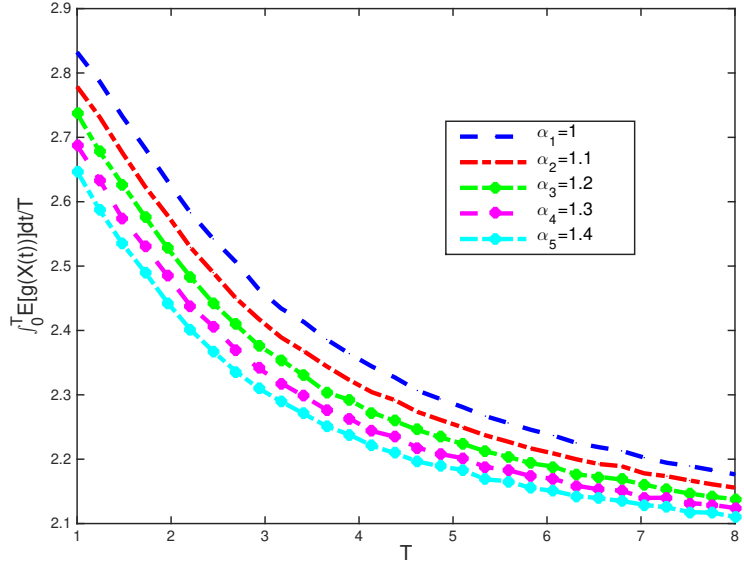}
  \caption{Expected reward rate versus $T$ for different values of $\alpha$.}
  \label{fig:expectation}
 \end{center} 
\end{figure}

The expected reward in a semi-regenerative cycle is
\begin{eqnarray*}
 \mathbb{E}_{\pi}\left[R(O_1)\right] &=&  \int_{}^{} \pi(d {\bf x},dw) \mathbb{E}_{O_1, \sigma_{L_i}}\left(\sum_{i=1}^{m} \int_{0}^{\min(O_1, \sigma_{L_i})} \mathbb{E}[g_i\left(X_i(s)+x_i\right)]~  ds\right), 
\end{eqnarray*}


\subsection{Expected cost due to the corrective maintenance of the degrading components in a semi-regenerative cycle}\label{4.2}
 Given $({\bf x},w)$, the probability that the degrading component $i$ fails in $(T_{k-1},T_k)$ is given by
 $$P_i^c(T_k)=P(T_{k-1}<\min({\bf \sigma_{M-x}}), \quad T_{k-1}<\min({\bf \sigma_{L-x}},Y)< T_k, \quad \sigma_{L_i-x_i} \leq \min(T_k,{\bf \sigma_{L-x}}+\tau,Y+\tau)). $$
 Hence, the expected cost due to corrective maintenance of the degrading components in a semi-regenerative cycle $\mathbb{E}_{\pi}\left[C^c(O_1)\right]$ is given by
\begin{equation*}
    \mathbb{E}_{\pi}\left[C^c(O_1)\right]=\int_{}^{} \pi(d {\bf x},dw) \sum_{i=1}^{m} \sum_{k=1}^{\infty} C_i^c P_i^c(T_k),
\end{equation*}
where $C_i^c$ is the corrective maintenance cost of component $i$ and $P_i^c(T_k)$ is obtained in the {\bf Appendix}.  

\subsection{Expected cost due to the corrective maintenance of the non-degrading {\color{black} part} in a semi- regenerative cycle}\label{4.3}
 Given $({\bf x},w)$, the probability that the non-degrading {\color{black} part fails} in $(T_{k-1},T_k)$ is equal to
 \begin{eqnarray*}
 P^f(T_k)=P(T_{k-1}<\min({\bf \sigma_{M-x}}), \quad T_{k-1}<\min(\sigma_{\bf{L-x}},Y)<T_k, \quad Y \leq \min(T_k,\sigma_{\bf{L-x}}+\tau)), 
 \end{eqnarray*}
hence the expected cost due to these corrective maintenance actions is given by
\begin{equation*}
    \mathbb{E}_{\pi}[C^{nm}(O_1)]=\int_{}^{} \pi(d {\bf x},dw) \sum_{k=1}^{\infty} C^f P^f(T_k),
\end{equation*}
where $C^f$ is the corrective maintenance cost of the non-degrading {\color{black} part} and $P^f(T_k)$ is calculated in the {\bf Appendix}.

\subsection{Expected cost due to preventive maintenance of the degrading components in a semi-regenerative cycle}\label{4.4}
Given $({\bf x},w)$, a preventive maintenance of the degrading component $i$ is performed in $(T_{k-1},T_k)$  if the following event occurs
$$
\left\{ T_{k-1}<\min({ \sigma_{\bf{M-x}}},Y),  \hspace{0.2cm} T_{k-1}<\sigma_{M_i-x_i}<\min(\sigma_{\bf{L-x}}+\tau,Y+\tau,T_k)<\sigma_{L_i-x_i}  \right\}, 
$$
with probability
$$P_i^p(T_k)=P(T_{k-1}<\min(\sigma_{\bf{M-x}},Y), \hspace{0.2cm} T_{k-1}<\sigma_{M_i-x_i}<\min(\sigma_{\bf{L-x}}+\tau,Y+\tau,T_k)<\sigma_{L_i-x_i}).  $$
Hence, the expected cost due to preventive {\color{black} replacements} of the degrading components in a semi-regenerative cycle is given by
\begin{equation*}
\mathbb{E}_{\pi}[C^p(O_1)]=\int_{}^{} \pi(d{\bf x}, dw) \sum_{i=1}^{\infty}\sum_{k=1}^{\infty} C_i^p P_i^p(T_k),
\end{equation*}
where $C_i^p$ is the preventive maintenance cost of component $i$ and $P_i^p(T_k)$ is calculated in the {\bf Appendix}.

\subsection{Expected cost due to inspections of the system in a semi-regenerative cycle}
Given $({\bf x},w)$, the expected cost due to inspections is given by
\begin{eqnarray*}
\mathbb{E}_{({\bf x},w)}\left[C(I(O_1))\right] 
&=& C^I \sum_{k=1}^{\infty} P(\min(\sigma_{\bf{M-x}},Y)>T_k) \\
&=& C^I \sum_{k=1}^{\infty} \prod_{i=1}^{m} \bar{F}_{\sigma_{M_i-x_i}}(T_k)\bar{F}_Y(T_k).
\end{eqnarray*}
Hence, the expected cost due to inspections in a semi-regenerative cycle is given by
\begin{equation*}
\mathbb{E}_{\pi}[C(I(O_1))]=C^I \int_{}^{} \pi(d{\bf x}, dw) \sum_{k=1}^{\infty} \prod_{i=1}^{m} \bar{F}_{\sigma_{M_i-x_i}}(T_k)\bar{F}_Y(T_k),
\end{equation*}
where $C^I$ denotes the cost of an inspection.

\subsection{Expected cost due to downtime in a semi-regenerative cycle} \label{4.5}

Given $({\bf x},w)$, a downtime of a {\color{black} degrading} component in $(T_{k-1},T_k)$ takes place if the following event occurs
$$
\lbrace T_{k-1}<\min(\sigma_{\bf{M-x}}), \quad T_{k-1}<\min(Y,\sigma_{\bf{L-x}}) <T_k \rbrace.
$$

We define the following expectations for the expected downtime of non-degrading {\color{black} part}
$$
D^{nm}(T_k)=\mathbb{E}\left[(\min(Y+\tau, {\bf \sigma_{L-x}}+\tau, T_k)-Y )\mathbf{1}_{\left\{T_{k-1}<\min(\sigma_{\bf{M-x}}), \ T_{k-1}<Y <\min(\sigma_{{\bf L-x}}+\tau,Y+\tau,T_k)\right\}}\right], 
$$
and for the expected downtime of degrading component $i$
$$
D_i(T_k)=\mathbb{E}\left[(\min(Y+\tau, {\bf \sigma_{L-x}}+\tau, T_k)-\sigma_{L_i-x_i} )\mathbf{1}_{\left\{ T_{k-1}<\min(\sigma_{\bf{M-x}}), \ T_{k-1}\leq \sigma_{L_i-x_i}<\min(\sigma_{{\bf L-x}}+\tau,Y+\tau,T_k) \right\}}\right], 
$$
for $i \in I^m$. Given $({\bf x},w)$, the expected cost due to downtime is given by
\begin{eqnarray*}
 \mathbb{E}_{({\bf x},w)}\left[D(O_1)\right] =  \sum_{k=1}^{\infty} c^{nm}D^{nm}(T_k) + \sum_{k=1}^{\infty} \sum_{i=1}^{m } c_i D_i(T_k),
\end{eqnarray*}
where $c_{nm}$ is the downtime cost of non-degrading {\color{black} part}, $c_i$ is the downtime cost for degrading component $i$ with $i=1,2, \ldots, m$. 
Hence, the expected downtime in a semi-regenerative cycle is given by
\begin{equation*}
    \mathbb{E}_{\pi}\left[D(O_1)\right] = \int_{}^{} \pi(d{\bf x}, dw) \mathbb{E}_{({\bf x},w)}\left[D(O_1)\right]. 
\end{equation*}
Expectations $D^{nm}(T_k)$ and $D_{i}(T_k)$ are calculated in the {\bf Appendix}.
\subsection{Expected length of a semi-regenerative cycle} \label{4.6}
Finally given $({\bf x},w)$, the expected length of a semi-regeneration cycle is equal to
\begin{eqnarray*}
\mathbb{E}_{({\bf x},w)}[O_1] &=& \sum_{k=1}^{\infty} \mathbb{E} \left[T_k \mathbf{1}_{\left\{T_{k-1}<\min(\sigma_{\bf{M-x}})<T_k<\min(\sigma_{\bf{L-x}},Y)\right\}}\right] \\
&+& \sum_{k=1}^{\infty} \mathbb{E} \left[\min(\sigma_{\bf{L-x}}+\tau, Y+\tau, T_k) \mathbf{1}_{\left\{T_{k-1}<\min(\sigma_{\bf{M-x}}), \hspace{0.2cm} T_{k-1}< \min(\sigma_{\bf{L-x}},Y)<T_k\right\}}\right], 
\end{eqnarray*}
therefore, 
\begin{equation*}
    \mathbb{E}_{\pi}[O_1]=\int_{} \pi(d {\bf x},dw) \mathbb{E}_{({\bf x},w)}[O_1]. 
\end{equation*}
The development of this expectation is given in the {\bf Appendix}.

\section{Numerical examples} \label{numerical}
To solve the optimization problem given in (\ref{constrainedproblem}), Monte Carlo simulation method and meta-heuristic algorithms are applied \cite{GA}.

\subsection{Identical components}\label{section_identical}
We consider $m$ identical degrading components whose degradation follows a gamma process with parameters $\alpha_i=1.25$ and $\beta_i=0.5$, for all $i \in I^m$. A degrading component fails when its degradation exceeds $L=6$. Non-degrading {\color{black} part fails} following an exponential distribution with parameter $\lambda=0.025$ failures per time unit. Delay time is equal to 
$\tau=0.5$. The following sequence of costs is imposed: 
$$C_{i}^c=80 \, m.u., \, C_{i}^p=30 \, m.u., \,  C^f=80 \, m.u., \, C^I=10 \,  m.u., \, c_i=5 \, m.u. \text{per} \, t.u., \,  c^{nm}=5 \, m.u. \text{per} \,  t.u.$$
For the reward function, the following parameters are used
$$g=2, \quad \theta_0=2, \quad \gamma_i=20. $$
The safety constraint given by (\ref{constrainedproblem}) is imposed with $w=0.05$. 
To optimize the objective cost function, first, Monte Carlo simulation is used to search potential solutions for $T$ and $M$. Using these potential solutions as initial points, meta-heuristic algorithms are applied to further improve the optimal policy so that the expected maintenance cost is minimized \cite{Song}. A {\em Pattern Search} algorithm under the constrained optimization problem given by Eq. (\ref{constrainedproblem}) is used.

Table \ref{computation} shows the initial points obtained with Monte Carlo simulation (a) and the optimal values for $T$ and $M$ considering the {\em Pattern Search} algorithm for different values of $m$ (b), under the constrained optimization problem given in (\ref{constrainedproblem}). Table \ref{computation} shows that, when the number of degrading components increases, the values of $T_{opt}$ decrease and the values of $M_{opt}$ increase. And, the expected cost rate increases when the number of degrading components increases. 

\subsection{Non identical components}
Non identical degrading components is next analysed. For that, we assume that
\begin{equation}
\alpha_1=1.1, \quad \alpha_i=\alpha_{i-1}+0.1, \quad  i \in I^m, 
\end{equation}
and the rest of the parameters and costs of Section \ref{section_identical} are used. 
As before, a degrading component fails when its degradation exceeds $L_i=6$,  for $\in I^m$. Time between not-degrading {\color{black} part} failures is exponential with $\lambda=0.025$ failures per unit time. Delay time is $\tau=0.5$.

The search of the optimal maintenance policy corresponds to find the vector 
$(T, M_1, \ldots, M_m)$ that optimizes the function $C_{\infty}$ given by Eq. (\ref{costfunction}). That is
$$C_{\infty}(T_{opt},M_{1,opt}, M_{2,opt}, \ldots, M_{m,opt})=\inf\left\{C_{\infty}(T,M_1, M_2, \ldots, M_m); \, T >2 \tau, M_i \leq L \quad i \in I^m\right\}. $$
Table \ref{computation2} shows the optimal values for the time between inspections, the optimal preventive thresholds, the expected cost rate and the probability of a critical situation for different values of $m$ obtained with a {\it Genetic Algorithm}. The optimal values for $T$ decrease with the number of components, and the optimal preventive thresholds decrease when the shape parameter increases.  By adding more components, the expected cost rate increases.


\begin{table}
\centering
\subfloat[]{
\begin{tabular}{c||c||c}
m & $T_{0}$ & $M_0$ \\ \hline \hline
2 & 4.39 & 2.95  \\
3 & 4.28 & 3.17 \\
4 & 3.13 & 3.38 \\
5 & 2.47 & 3.78 \\
6 & 2.29 & 3.93   \\
7 & 1.89 & 4.18  \\
8 & 1.64 & 4.37  \\
9 & 1.53 & 4.65  \\
10 & 1.48 & 4.70 
\end{tabular}}
\qquad\qquad\qquad\qquad
\subfloat[]{
\begin{tabular}{c||c||c || c}
m & $(T_{opt},M_{opt})$ & $C_{\infty}(T_{opt},M_{opt})$ & $P_{\pi}(T_{opt},M_{opt})$ \\ \hline \hline
2 & (4.317, 3.075) & 8.140 & 0.001  \\
3 & (3.726, 3.345) & 9.649 & $<10^{-5}$\\
4 & (3.130, 3.680) & 11.657 & $<10^{-5}$ \\
5 & (2.657, 3.920) & 12.531 & $<10^{-5}$ \\
6 & (2.106, 4.215) & 13.923 & $<10^{-5}$  \\
7 & (1.918, 4.402) & 14.972 & $<10^{-5}$ \\
8 & (1.828, 4.451) & 15.795 & $<10^{-5}$ \\
9 & (1.733, 4.554) & 16.671 & $<10^{-5}$  \\
10 & (1.558, 4.662) &  17.600  & $<10^{-5}$
\end{tabular}
}
\caption{(a) Starting points. \hspace{2cm} (b) $T_{opt}, M_{opt}$ and $C_{\infty}$ using the {\it Pattern Search} method.} \label{computation}
\end{table}




\hspace{2cm}

\begin{table}[tbph]
\begin{center}
\begin{tabular}{c||c||c||c || c}
m & $T_{opt}$ & $(M_{1,opt}, M_{2,opt}, \ldots, M_{m,opt})$ & $C_{\infty}(T,M_{1,opt}, \ldots, M_{m,opt})$ & $P_{\pi}(T,M_{1,opt}, \ldots, M_{m,opt})$  \\ \hline \hline
2 & 5.998 & (2.679, 2.014) & 4.968 & 0.015  \\
3 & 3.997 & (3.392, 3.295, 3.086) & 8.237 & $5e^{-5}$ \\
4 & 2.798 & (4.318, 3.958, 3.496, 3.236) & 11.681 & $<10^{-6}$ \\
5 & 2.200 & (3.939, 3.860, 3.805, 3.724, 3.660) & 13.217 & $<10^{-6}$ 
\end{tabular}
\caption{Optimal parameter values for non-identical degrading components, optimal expected cost rate and probability of a critical situation with $\tau=0.5$,  obtained with {\it  Genetic Algorithm}.} \label{computation2}
\end{center}
\end{table}

\subsection{Sensitivity analysis}

%
%
%


A sensitivity analysis of the shape and scale parameters of the gamma process, as well as the rate parameter of the exponential distribution is performed, considering that all the degrading components are independent and identical. To carry out this analysis, the dataset of Section \ref{section_identical} is used. 

Tables \ref{computation5}, \ref{computation4} and \ref{computation6} show the relative variation for the parameter values $\alpha$ (shape), $\beta$ (scale),  and $\lambda$ (exponential rate), respectively, and for each value of $m$ from $2$ up to $10$ components. A shaded grey scale is used in the sensitivity analysis. The clearest coloured cells are the ones with the least variability, and vice versa.

 A relative measure is defined as:
$$
V_{\alpha,\beta,\lambda}= \frac{|C_{\infty}(T_{opt},M_{opt})-C_{\alpha, \beta, \lambda}|}{C_{\infty}(T_{opt},M_{opt})},
$$
where $C_{\alpha, \beta, \lambda}$ is the minimal expected cost rate obtained by varying one of the three parameters, keeping the other two fixed,  and $C_{\infty}(T_{opt},M_{opt})$ is the cost obtained in Table \ref{computation2} (b). $V_{\alpha,\beta, \lambda}$ measures the relative difference between the optimal cost and the optimal cost calculated by using the modified parameters. The closer to zero, the less influence the modified parameter values have on the solution. 
Regarding the results, parameters $\alpha$ and $\beta$ have similar effects on $V_{\alpha,\beta,\lambda}$. However, by modifying $\pm 0.05$ around $\alpha$ and $\beta$, the relative  variation is small (less than 5\%). In the case of parameter $\lambda$, almost all the values obtained are below $2\%$ variation, so it has less influence on the solution. 

\vspace{2cm}

\begin{table}[tbph]
\begin{center}
\begin{tabular}{|c|c|c|c|c|c|c|c|}
\hline
\backslashbox{$m$}{$\alpha$} & 1.10 & 1.15 & 1.20 & 1.25 & 1.30 &  1.35  & 1.40 \\
\hline
2  & \cellcolor{gris5} 0.1246 & \cellcolor{gris3} 0.0652 & \cellcolor{gris1} 0.0158 & \cellcolor{gris0} 0.0000 &\cellcolor{gris1} 0.0157 &  \cellcolor{gris2} 0.0502 & \cellcolor{gris5} 0.1374  \\

3  & \cellcolor{gris5} 0.1293 & \cellcolor{gris2} 0.0557 & \cellcolor{gris1} 0.0128 & \cellcolor{gris0} 0.0000 & \cellcolor{gris1} 0.0267 & \cellcolor{gris3} 0.0686 & \cellcolor{gris6} 0.1577  \\

4  & \cellcolor{gris6} 0.1782  & \cellcolor{gris3} 0.0778  & \cellcolor{gris1} 0.0277 & \cellcolor{gris0} 0.0000 &  \cellcolor{gris1} 0.0171  & \cellcolor{gris2} 0.0488 & \cellcolor{gris5} 0.1231 \\

5  & \cellcolor{gris6} 0.1525  & \cellcolor{gris3} 0.0758  & \cellcolor{gris1} 0.0143 & \cellcolor{gris0}  0.0000  & \cellcolor{gris1} 0.0189  &  \cellcolor{gris3} 0.0804 & \cellcolor{gris6}  0.1753 \\

6  & \cellcolor{gris6} 0.1696 & \cellcolor{gris3} 0.0871  &  \cellcolor{gris1} 0.0192 & \cellcolor{gris0}  0.0000 & \cellcolor{gris1} 0.0196 & \cellcolor{gris3}  0.0763 & \cellcolor{gris6} 0.1525  \\

7  & \cellcolor{gris6} 0.1779 & \cellcolor{gris3} 0.0877 & \cellcolor{gris1}  0.0147 & \cellcolor{gris0}  0.0000 & \cellcolor{gris1} 0.0169 & \cellcolor{gris3} 0.0809 & \cellcolor{gris6} 0.1562 \\

8 & \cellcolor{gris6} 0.1790 & \cellcolor{gris3} 0.0835 & \cellcolor{gris1} 0.0122 &  \cellcolor{gris0} 0.0000 &  \cellcolor{gris1}  0.0232 & \cellcolor{gris3} 0.0805 & \cellcolor{gris6} 0.1879 \\

9  & \cellcolor{gris6} 0.1816 & \cellcolor{gris3} 0.0828 & \cellcolor{gris1}  0.0195 & \cellcolor{gris0}  0.0000 & \cellcolor{gris1} 0.0211 & \cellcolor{gris3} 0.0833 & \cellcolor{gris6} 0.1849 \\

10  & \cellcolor{gris6} 0.1841  & \cellcolor{gris4} 0.0963 & \cellcolor{gris1}  0.0244 & \cellcolor{gris0}  0.0000 & \cellcolor{gris1} 0.0204  &   \cellcolor{gris3} 0.0842  & \cellcolor{gris6} 0.1817 \\
\hline
\end{tabular}
\caption{Relative variation for the parameter $\alpha$ of the gamma process.} \label{computation5}
\end{center}
\end{table}

\vspace{1cm}

\begin{table}[tbph]
\begin{center}
\begin{tabular}{|c|c|c|c|c|c|c|c|}
\hline
\backslashbox{$m$}{$\beta$} & 0.35 & 0.40 & 0.45 & 0.5 & 0.55 & 0.60 & 0.65 \\
\hline
2  & \cellcolor{gris5} 0.1425  &  \cellcolor{gris3} 0.0782  &  \cellcolor{gris1} 0.0232  & \cellcolor{gris0} 0.0000 &\cellcolor{gris1} 0.0122   & \cellcolor{gris3} 0.0719 & \cellcolor{gris5} 0.1486  \\
3  & \cellcolor{gris6} 0.1519  & \cellcolor{gris3} 0.0734 & \cellcolor{gris1} 0.0171 & \cellcolor{gris0} 0.0000 & \cellcolor{gris1}  0.0295 & \cellcolor{gris4} 0.0993  & \cellcolor{gris6} 0.1900 \\
4  & \cellcolor{gris6} 0.2004 & \cellcolor{gris4} 0.1072 & \cellcolor{gris1} 0.0246  & \cellcolor{gris0}  0.0000 & \cellcolor{gris1}  0.0259 & \cellcolor{gris3} 0.0657 & \cellcolor{gris5} 0.1466 \\
5  & \cellcolor{gris6} 0.1757 & \cellcolor{gris3} 0.0849  & \cellcolor{gris1} 0.0178 & \cellcolor{gris0} 0.0000 & \cellcolor{gris1}   0.0219 & \cellcolor{gris3} 0.873  & \cellcolor{gris6} 0.1956 \\
6  & \cellcolor{gris6} 0.1920  & \cellcolor{gris3} 0.0884  & \cellcolor{gris1} 0.0218 & \cellcolor{gris0}   0.0000 & \cellcolor{gris1}  0.0179 & \cellcolor{gris3} 0.0858 & \cellcolor{gris6}  0.1754 \\
7  & \cellcolor{gris6} 0.2018 & \cellcolor{gris4} 0.0975 & \cellcolor{gris1} 0.0163  & \cellcolor{gris0}   0.0000 & \cellcolor{gris1}   0.0247 & \cellcolor{gris4} 0.0956 & \cellcolor{gris6} 0.1882 \\
8  & \cellcolor{gris6} 0.1966 & \cellcolor{gris4} 0.0975 & \cellcolor{gris1} 0.0226 & \cellcolor{gris0}   0.0000 & \cellcolor{gris1}  0.0261 & \cellcolor{gris4} 0.1046 & \cellcolor{gris6} 0.2051 \\
9 & \cellcolor{gris6}  0.2028 & \cellcolor{gris4} 0.1034 & \cellcolor{gris1} 0.0213 & \cellcolor{gris0}   0.0000 & \cellcolor{gris1}  0.0245 & \cellcolor{gris4} 0.1103  & \cellcolor{gris6} 0.2114  \\
10  & \cellcolor{gris6} 0.2152  & \cellcolor{gris4} 0.1081 & \cellcolor{gris1} 0.0285 & \cellcolor{gris0}  0.0000 & \cellcolor{gris1}  0.0137 & \cellcolor{gris4}  0.1119 & \cellcolor{gris6}  0.2236 \\
\hline
\end{tabular}
\caption{Relative variation for the parameter $\beta$ of the gamma process.} \label{computation4}
\end{center}
\end{table}

\begin{table}[tbph]
\begin{center}
\begin{tabular}{|c|c|c|c|c|c|c|c|}
\hline
\backslashbox{$m$}{$\lambda$} & 0.010 & 0.015 & 0.020 & 0.025 & 0.030 & 0.035 & 0.040  \\
\hline
2  & \cellcolor{gris2} 0.0273 & \cellcolor{gris1}  0.0184 & \cellcolor{gris1} 0.0157  & \cellcolor{gris0} 0.0000 &\cellcolor{gris1} 0.0114 & \cellcolor{gris1}  0.0145 & \cellcolor{gris2} 0.0264 \\
3  & \cellcolor{gris2} 0.0222  & \cellcolor{gris1} 0.0189 & \cellcolor{gris1} 0.0128  & \cellcolor{gris0} 0.0000 & \cellcolor{gris1} 0.0082   & \cellcolor{gris1}  0.0120  & \cellcolor{gris2}  0.0203 \\
4  & \cellcolor{gris1} 0.0187 & \cellcolor{gris1} 0.0164 &  \cellcolor{gris1} 0.0130 & \cellcolor{gris0} 0.0000 &  \cellcolor{gris1} 0.0131  & \cellcolor{gris1}  0.0175  & \cellcolor{gris2}  0.0216 \\
5  & \cellcolor{gris1} 0.0153 & \cellcolor{gris1}  0.0127 & \cellcolor{gris1} 0.0091 & \cellcolor{gris0} 0.0000  & \cellcolor{gris1} 0.0055  &  \cellcolor{gris1} 0.0129  & \cellcolor{gris1} 0.0185 \\
6  & \cellcolor{gris1}  0.0141 & \cellcolor{gris1} 0.0091 &  \cellcolor{gris1} 0.0053  & \cellcolor{gris0} 0.0000  & \cellcolor{gris1}  0.0045 & \cellcolor{gris1}  0.0093  & \cellcolor{gris1} 0.0101 \\
7  & \cellcolor{gris1} 0.0134 & \cellcolor{gris1} 0.0102 & \cellcolor{gris1} 0.0054  & \cellcolor{gris0} 0.0000 & \cellcolor{gris1} 0.0067  & \cellcolor{gris1} 0.0121  & \cellcolor{gris1} 0.0138 \\
8 & \cellcolor{gris1}  0.0181 & \cellcolor{gris1} 0.0122 & \cellcolor{gris1} 0.0092 & \cellcolor{gris0} 0.0000  &  \cellcolor{gris1} 0.0099  & \cellcolor{gris1}  0.0129 & \cellcolor{gris1} 0.0173 \\
9  & \cellcolor{gris1} 0.0158 & \cellcolor{gris1} 0.0093 & \cellcolor{gris1} 0.0068 & \cellcolor{gris0} 0.0000  & \cellcolor{gris1} 0.0047 &  \cellcolor{gris1} 0.0069 & \cellcolor{gris1}  0.0114 \\
10  & \cellcolor{gris2} 0.0238 & \cellcolor{gris1} 0.0112 & \cellcolor{gris1} 0.0089 &  \cellcolor{gris0} 0.0000  & \cellcolor{gris1} 0.0058 & \cellcolor{gris1} 0.0136 & \cellcolor{gris2} 0.0240 \\
\hline
\end{tabular}
\caption{Relative variation for the parameter $\lambda$ of the exponential distribution.} \label{computation6}
\end{center}
\end{table}

\newpage

\section{Conclusions, limitations and further works}\label{conclusions}
In this paper, a multi-component  system consisting of degrading and {\color{black} non-degrading} components is analyzed. Degrading components are subject to degradation. {\color{black} Non-degrading components} fail without warning. A  maintenance policy {\color{black} at system-level} combining condition-based maintenance, opportunistic maintenance, an inspection policy and lead time to start the maintenance actions is developed. {\color{black} An analytic cost model for the maintenance strategy is developed using the semi-regenerative theory including the cost of the different maintenance actions and the reward. The reward of the system depends on the deterioration of the degrading components. Despite the semi-regenerative techniques simplify the analytic development, the expression of the kernel is quite complex}. Numerical examples are given combining Monte-Carlo simulation and meta-heuristic algorithms to find the optimal maintenance policy under a safety constraint. Sensibility analysis is given to show the robustness of the solution when different parameters vary.

{\color{black} An important limitation of this work refers to the analytic treatment of the non-degrading part. In this paper, the non-degrading part is treated as a whole with failure times modeled using constant failure rates. In the reliability literature, there are examples of the modeling of the non-degrading part using exponential distributions. For example, in \cite{Heier2}, a cooling fan can fail due to an electronic failure caused in the electronic control unit or a mechanical failure in the bearing, rotor or blades. The electronic part is modeled using exponential distributions. The use of exponential distributions to model the time between failures in the non-degrading part simplifies the analytic treatment. However, a more realistic treatment of the system would be to analyze the behavior of the non-degrading components separately. Furthermore, we assume that if the non-degrading part fails, it can consider as a new one after the maintenance. A more realistic situation would be to perform different maintenance actions in the non-degrading components depending on if they are failed or not. }

{\color{black} Another limitation of this paper refers to the structure of the system. We have also assumed in this paper that the degrading components works independently from an structural point of view. In practice, components of a system are not independent and dependence relationships can be established between them. {\color{black} It would be very interesting to carry out a study of the dependence between the different components in future works.} Different structures of multi-component systems, such as parallel systems, series systems or $k$-out-of $n$ systems can also be analyzed as further works. }

Finally, with respect to the optimization procedure, this work is focused on the analytic development. Further work should be focused on the optimization algorithms.

%


\section*{Appendix} \label{apendice}
In this Appendix, the probabilities given in Sections \ref{4.2}, \ref{4.3}, \ref{4.4}, \ref{4.5} and \ref{4.6} are obtained. 
\subsection*{Probability of a corrective maintenance in a semi-regenerative cycle (degrading components)}
Given $({\bf x},v)$, if the degrading component $i$ fails in $(T_{k-1},T_k)$, with $i \in I^m$, the following scenarios are envisioned. 
\begin{enumerate}
    \item The degrading component $i$ fails before the rest of degrading and not degrading components, that is
    $$\min(\sigma_{\bf{L-x}},Y)=\sigma_{L_i-x_i}, \hspace{0.3cm}
    T_{k-1}<\min({\bf \sigma_{M-x}}) <\sigma_{L_i-x_i}<T_k. $$
    \item The degrading component $i$ fails in the delay time, that is
    $$
    \left\{ T_{k-1}<\min(\sigma_{\bf{M-x}}), \right.  \hspace{0.3cm}
    \left. T_{k-1}<\min(\sigma_{\bf{L-x}},Y)<\sigma_{L_i-x_i}<\min(T_k,\sigma_{\bf{L-x}}+\tau,Y+\tau) \right\}.
    $$
\end{enumerate}
Let $P_{i}^{(c,1)}(T_k)$ and $P_{i}^{(c,2)}(T_k)$ be the probability of the first and second scenario respectively, we get that
$$
 P_{i}^{(c,1)}(T_k)=  \int_{T_{k-1}}^{T_k}f_{\sigma_{M_i-x_i}}(u_i)~  du_i \int_{u_i}^{T_k} f_{\sigma_{L_i-x_i}-\sigma_{M_i-x_i}}(w-u_i) \bar{F}_Y(w)  \prod_{j \neq i, j=1}^{m} G_j(T_{k-1},\infty, w)~  dw  , 
$$
where $G_j(w_1,w_2,w_3)$ is given by (\ref{G}).
For the second scenario,
\begin{eqnarray*}
&&  P_{i}^{(c,2)}(T_k) = \int_{T_{k-1}}^{T_k}\frac{-d}{dw}\left(\prod_{j=1, j \neq i}^{m} G_j(T_{k-1},\infty,w) \bar{F}_Y(w)\right) ~ dw \left( \right. \\ && \left. \int_{T_{k-1}}^{w^*}f_{\sigma_{M_i-x_i}}(u_i)  ~ du_i \int_{w}^{w^*}f_{\sigma_{L_i-x_i}-\sigma_{M_i-x_i}}(z-u_i) ~ dz, \right)
\end{eqnarray*}
where $w^*=\min(w+\tau,T_k)$ and $G_z(w_1,w_2,w_3)$ is given by Eq. (\ref{G}).
Hence, 
$$P_i^c(T_k)=P_i^{(c,1)}(T_k)+P_i^{(c,2)}(T_k).$$
\subsection*{Probability of a corrective maintenance in a semi-regenerative cycle (non-degrading {\color{black} part})}
Given $({\bf x},v)$, if the non-degrading {\color{black} part fails} in $(T_{k-1},T_k)$, with $k=1, 2, \ldots$, as in the previous subsection, two scenarios are envisioned.
\begin{enumerate}
\item The non-degrading {\color{black} part fails} before the degrading components, that is:
    $$\left\{T_{k-1}<\min(\sigma_{\bf{M-x}}), \quad T_{k-1}<Y<\min(\sigma_{\bf{L-x}},T_k) \right\}$$
    \item The non-degrading {\color{black} part fails} in the delay time to {\color{black} maintenance}, that is:
    $$\left\{T_{k-1}<\min(\sigma_{\bf{M-x}})<\min(\sigma_{\bf{L-x}})<Y<\min(\sigma_{\bf{L-x}}+\tau,T_k) \right\}$$
\end{enumerate}
Let $P^{(f,1)}(T_k)$ and $P^{(f,2)}(T_k)$ be the probability of the first and second scenario, respectively. Thus,
\begin{equation*} \label{eq1}
P^{(f,1)}(T_k)=\int_{T_{k-1}}^{T_k} f_Y(w)\prod_{j = 1}^{m} G_j(T_{k-1},\infty,w) ~ dw. 
\end{equation*}
and the probability for the second scenario is given by
\begin{eqnarray*}
 P^{(f,2)}(T_k) = \int_{T_{k-1}}^{T_k} \frac{-d}{dw} \left(\prod_{i=1}^m G_i(T_{k-1},\infty,w)\right) \left(\int_{w}^{w^*} f_Y(z)dz\right) ~ dw,    
\end{eqnarray*}
where $w^*=\min(w+\tau,T_k)$. Hence
\begin{eqnarray*}
 P^{f}(T_k) = P^{(f,1)}(T_k) +P^{(f,2)}(T_k).  
\end{eqnarray*}
\subsection*{Expected downtime in a semi-regenerative cycle}
To compute the expected downtime in a semi-regenerative cycle, first the expectation that corresponds to the expected downtime of the non-degrading {\color{black} part}
$$\mathbb{E}\left[\left(\min(Y+\tau, {\bf \sigma_{L-x}}+\tau, T_k)-Y\right)\mathbf{1}_{\left\{T_{k-1}<\min(\sigma_{\bf{M-x}}), \, \, T_{k-1}<\min({\bf \sigma_{L-x}},Y)  \leq Y<\min(T_k,Y+\tau, {\bf \sigma_{L-x}}+\tau\right\}}\right],  $$
is obtained. This expectation is evaluated considering that the non-degrading {\color{black} part} fails before the degrading components and considering that the non-degrading {\color{black} part} fails after a degrading component. Hence, the expected downtime of the non-degrading {\color{black} part} is given by 
\begin{eqnarray*}
D^{nm}(T_k) &=& \mathbb{E} \left[(\min(Y+\tau,T_k)-Y)\mathbf{1}_{\left\{T_{k-1}<\min(\sigma_{\bf{M-x}}), \, T_{k-1}<Y<\min({\bf \sigma_{L-x}})<T_k\right\}}\right] \\
&+& \mathbb{E} \left[(\min({\bf \sigma_{L-x}}+\tau,T_k)-Y)\mathbf{1}_{\left\{T_{k-1}<\min(\sigma_{\bf{M-x}}), \, T_{k-1}<\min({\bf \sigma_{L-x}})<Y<T_k\right\}}\right]
\end{eqnarray*}
and
\begin{eqnarray*}
D^{nm}(T_k) &=& \int_{T_{k-1}}^{T_k} f_Y(w)\prod_{j = 1}^{m} G_j(T_{k-1},\infty,w)(w^*-w) dw \\
&+& \int_{T_{k-1}}^{T_k} \frac{-d}{dw} \left(\prod_{i=1}^m G_i(T_{k-1},\infty,w)\right) \left(\int_{w}^{w^*} f_Y(z)dz\right) (w^*-z)~ dw, 
\end{eqnarray*}
where $w^*=\min(w+\tau,T_k)$. 

To compute the expected downtime for the component $i$, for $i \in I^m$, the expectation
$$\mathbb{E}\left[\left(\min(Y+\tau, {\bf \sigma_{L-x}}+\tau, T_k)-\sigma_{L_i-x_i}\right)\mathbf{1}_{\left\{T_{k-1}<{\bf \sigma_{M-x}}, \, \, T_{k-1}<\min(Y, {\bf \sigma_{L-x}})<\sigma_{L_i-x_i}<T_k \right\}}\right],    $$
has to be evaluated. As in the non-degrading {\color{black} part} case, the expectation is evaluated considering that the degrading component $i$ is the first component to fail and considering that other components fail before the degrading component $i$. That is
\begin{eqnarray*}
D_{i}(T_k) &=& \mathbb{E} \left[(\min(\sigma_{L_i-x_i}+\tau,T_k)-\sigma_{L_i-x_i})\mathbf{1}_{\left\{T_{k-1}<\min(\sigma_{\bf{M-x}}), \, T_{k-1}<\sigma_{L_i-x_i} \leq \min(Y,  \sigma_{\bf{L-x}})<T_k\right\}}\right] \\
&+& \mathbb{E} \left[(\min({\bf \sigma_{L-x}}+\tau,Y, T_k)-\sigma_{L_i-x_i})\mathbf{1}_{\left\{T_{k-1}<\min(\sigma_{\bf{M-x}}), \, T_{k-1}<\min(Y, {\bf \sigma_{L-x}})<\sigma_{L_i-x_i}<T_k\right\}}\right]
\end{eqnarray*}
and these probabilities are given as follows
\begin{eqnarray*}
&& D_i(T_k)  =  \int_{T_{k-1}}^{T_k}f_{\sigma_{M_i-x_i}}(u_i)~  du_i \int_{u_i}^{T_k} f_{\sigma_{L_i-x_i}-\sigma_{M_i-x_i}}(w-u_i) \bar{F}_Y(w)  \prod_{j \neq i} G_j(T_{k-1},\infty, w)~  (w^*-w)dw  \\
&& + \int_{T_{k-1}}^{T_k}\frac{-d}{dw}\left(\prod_{j=1, j \neq i}^{m} G_j(T_{k-1},\infty,w) \bar{F}_Y(w)  dw \right) \int_{T_{k-1}}^{w^*}f_{\sigma_{M_i-x_i}}(u_i)  du_i \int_{w}^{w^*}f_{\sigma_{L_i-x_i}-\sigma_{M_i-x_i}}(z-u_i)(w^*-z)dz, 
\end{eqnarray*}
where $w^*=\min(w+\tau,T_k)$. 
\subsection*{Probability of a preventive maintenance of the {\color{black} degrading} component $i$ in the semi-regenerative cycle}
To evaluate this probability, two cases are envisioned.
\begin{enumerate}
    \item The preventive threshold of the degrading component $i$ is exceeded in $(T_{k-1},T_k)$ and, at least, {\color{black} a failure of a degrading component or a failure of the non-degrading part} occurs in this interval. It corresponds to the following event
    $$\left\{T_{k-1}<\min(\sigma_{\bf{M-x}}) \leq \sigma_{M_i-x_i} \leq T_k, \, \, T_{k-1}<\min(\sigma_{\bf{L-x}},Y)<T_k<\sigma_{L_i-x_i} \right\}$$
    \item The preventive threshold of the degrading component $i$ is exceeded in $(T_{k-1},T_k)$ and {\color{black}no failure arrives to the system} in $(T_{k-1},T_k)$. 
     $$\left\{T_{k-1}<\min(\sigma_{\bf{M-x}}) \leq \sigma_{M_i-x_i} \leq T_k<\min(\sigma_{\bf{L-x}},Y) \right\}$$
\end{enumerate}
Let $P_{i}^{(p,1)}(T_k)$ be the probability of the scenario 1. This probability is given by:
\begin{eqnarray*}
&& P_{i}^{(p,1)}(T_k) = \int_{T_{k-1}}^{T_k} \frac{-d}{dw} \left(\prod_{j \neq i} \bar{G}_j(T_{k-1},\infty,w) \bar{F}_Y(w)\right) G_i(T_{k-1},T_k, w^*) ~ dw, 
\end{eqnarray*}
where $w^*=\min(w+\tau,T_k)$. On the other hand, the probability of the second scenario is given by:
\begin{eqnarray*}
P_{i}^{(p,2)}(T_k) &=& \left(\int_{T_{k-1}}^{T_k}f_{\sigma_{M_i-x_i}}(u_i)\bar{F}_{\sigma_{L_i-x_i}-\sigma_{M_i-x_i}}(T_k-u_i) du_i \right)\bar{F}_Y(T_k) \prod_{j \neq i} G_j(T_{k-1},\infty,T_k) = \\
&=& G_i(T_{k-1},T_k,T_k)\bar{F}_Y(T_k) \prod_{j \neq i} G_j(T_{k-1},\infty,T_k). 
\end{eqnarray*}
Finally, the probability of a preventive maintenance of the component $i$ is given by
\begin{equation*}
   P_i^p(T_k) =P_{i}^{(p,1)}(T_k)+P_{i}^{(p,2)}(T_k). 
\end{equation*}

\subsection*{Expected length of a semi-regenerative cycle}
To evaluate this expectation, the following cases are considered. 

\begin{eqnarray*}
\mathbb{E}[O] &=&  \sum_{k=1}^{\infty} \mathbb{E}\left[T_k \mathbf{1}_{\left\{T_{k-1}<\min({\bf \sigma_{M-x}})<T_k<\min({\bf \sigma_{L-x}}, Y)\right\}}\right] \\
&+& \sum_{k=1}^{\infty}  \mathbb{E}\left[T_k \mathbf{1}_{\left\{T_{k-1}<\min(\sigma_{M-x})<\min(\sigma_{L-x},Y)<T_k<\min(\sigma_{L-x},Y)+\tau\right\}}\right] \\ &+&
\sum_{k=1}^{\infty}  \mathbb{E}\left[(\min(\sigma_{L-x},Y)+\tau) \mathbf{1}_{\left\{T_{k-1}<\min(\sigma_{M-x}), T_{k-1}< Y <\min(\sigma_{L-x},Y)+\tau<T_k\right\}}\right]. 
\end{eqnarray*}

Hence, we get that

\begin{eqnarray*}
\mathbb{E}[O] &=& \sum_{k=1}^{\infty} T_k \bar{F}_Y(T_k)\sum_{j=1}^{m} \int_{T_{k-1}}^{T_k}f_{\sigma_{M_j-x_j}}(u_j)\prod_{i \neq j}^{m} G_i(u_j,\infty,T_k) du_j \\
&+& \sum_{k=1}^{\infty} \int_{T_{k-1}}^{T_k-\tau} f_Y(u)\prod_{i=1}^{m} G_i(T_{k-1},\infty,u)(u+\tau) du \\
&+&  \sum_{k=1}^{\infty} T_k\int_{T_{k}-\tau}^{T_k} f_Y(u)\prod_{i=1}^{m} G_i(T_{k-1},\infty,u) du \\
&+& \sum_{k=1}^{\infty} \sum_{j=1}^{m} \int_{T_{k-1}}^{T_k-\tau} f_{\sigma_{M_j-x_j}}(u_j)\int_{u_j}^{T_k-\tau}f_{\sigma_{L_j-x_l}-\sigma_{M_j-x_j}}(w_j-u_j)du_j \left( \right. \\ && \left. \prod_{i \neq j}^{m} G_i(T_{k-1},\infty,w_j)\bar{F}_Y(w_j)(w_j+\tau) dw_j \right)\\
&+& \sum_{k=1}^{\infty} \sum_{j=1}^{m} \int_{T_{k-1}}^{T_k}f_{\sigma_{M_j-x_j}}(u_j)du_j\int_{\max(T_k-\tau,u_j)}^{T_k}f_{\sigma_{L_j-x-j}-\sigma_{M_j-x_j}}(w_j-u_j)\prod_{i \neq j}^{m} G_i(T_{k-1},\infty,w_j)\bar{F}_Y(w_j)T_k.
\end{eqnarray*}

\newpage

\section*{References}

\end{document}